\theoremstyle{thmstylethree}
\newtheorem{theorem}{Theorem}
\newtheorem{example}{Example}
\newtheorem{remark}{Remark}
\newtheorem{definition}[theorem]{Definition}
\newtheorem{lemma}[theorem]{Lemma}
\newtheorem{corollary}[theorem]{Corollary}
\begin{document}
	
\title{Extension of Optimal Locally Repairable codes}
\author[1]{\fnm{Yunlong} \sur{Zhu}}\email{zhuylong3@mail2.sysu.edu.cn}
\author*[1,2]{\fnm{Chang-An} \sur{Zhao}}\email{zhaochan3@mail.sysu.edu.cn}

\affil*[1]{\orgdiv{School of Mathematics}, \orgname{Sun Yat-sen University}, \orgaddress{\city{Guangzhou}, \postcode{510275}, \state{Guangdong}, \country{China}}}

\affil[2]{\orgname{Guangdong Key Laboratory of Information Security}, \orgaddress{\city{Guangzhou}, \postcode{510006}, \state{Guangdong}, \country{China}}}

\abstract{
Recent studies have delved into the construction of locally repairable codes (LRCs) with optimal minimum distance from function fields.  In this paper, we present several novel constructions by extending the findings of optimally designed locally repairable codes documented in the literature. Let $C$ denote an optimal LRC of locality $r$, implying that every repairable block of $C$ is a $[r+1, r]$ MDS code, and $C$ maximizes its minimum distance. By extending a single coordinate of one of these blocks, we demonstrate that the resulting code remains an optimally designed locally repairable code.  This  suggests that the maximal length of an optimal LRC from rational function fields can be extended up to $q+2$ over a finite field $\mathbb{F}_q$. In addition, we give a new construction of optimal $(r, 3)$-LRC by extending one coordinate in each block within $C$. Furthermore, we propose a novel family of LRCs with Roth-Lempel type that are optimal under certain conditions. Finally,  we explore optimal LRCs derived from elliptic function fields and extend a single coordinate of such codes. This approach leads us to confirm that the new codes are also optimal, thereby allowing their lengths to reach $q + 2\sqrt{q} - 2r - 2$ with locality $r$.  We also consider the construction of optimal $(r, 3)$-LRC in elliptic function fields, with exploring one more condition.}
\keywords{Locally repairable codes, RS codes, Function field, Algebraic codes.}

\maketitle
\section{Introduction}
Among various coding techniques employed in distributed and cloud storage systems, locally repairable codes (LRCs for short, which was introduced by Gopalan et al. in \cite{Gopalan}) have emerged as a prominent solution due to their ability to efficiently repair lost or corrupted nodes with minimal resource utilization. Suppose that $\mathbb{F}_q$ is a finite field with $q$ elements. Then we say that an $[n,k,d]$ linear code $C$ over $\mathbb{F}_q$ is an LRC with locality $r$ if for a codeword ${\bf c}=(c_1,\ldots,c_n)$, the value of any coordinate $c_i$ can be recovered by at most $r$ other coordinates of the codeword {\bf c}. The concept of LRCs has two generalizations. One concerns the local recovery of several coordinates(\cite{Galindo,Gao,Xing}), the other concerns codes with multiple recovery sets (\cite{Bartoli,Chara-mul,Haymaker,Jin-mul,Munuera}).

Let $\delta\ge2$ be a positive integer. An $[n,k,d]$ linear code $C$ is called $(r,\delta)$-LRC if for any $i\in\{1,\ldots,n\}$, there exists a subset $I_i$ such that $|I_i|\le r+\delta-1$ and any $\delta-1$ values of this block can be recovered by other $|I_i|-\delta+1$ values in $I_i$. The famous Singleton bound for classical codes asserts that
\[
	d\le n-k+1
\]
for an $[n,k,d]$ linear code $C$. Suppose that $C$ has locality $r$. Then it was proved in \cite{Gopalan} that
\[
	d\le n-k-\lceil\frac{k}{r}\rceil+2.
\]
The bound of the minimum distance is called the Singleton-type bound of LRCs. For $(r,\delta)$-LRCs, the Singleton-type bound was proved in \cite{Kamath} as
\[
	d\le n-k-(\lceil\frac{k}{r}\rceil-1)(\delta-1)+1.
\]
A locally repairable code $C$ is called $d$-optimal if the minimum distance of $C$ achieves the upper Singleton-type bound. In this paper, we call that $C$ is optimal for short.

A general construction of optimal LRCs from rational function fields is given by Tamo et al. in \cite{Tamo-family}. They proposed the concept of "good polynomial" which has degree $r+1$ and is constant on each block in the partition of the selected points set. Suppose that $C$ is an LRC constructed by this construction, and any repairable block of $C$ can be seen as an MDS code. The construction of LRCs via good polynomials have attracted the attention of many researchers (\cite{Ruikai-good,Ruikai-function,Dukes,Gao,Liu}). Barg et al. lifted this idea to function fields with higher genus and gave the general construction from algebraic curves in \cite{Barg-lrcalg}. Barg et al. also gave a general construction of optimal LRCs from function field in \cite{Barg-alg}. Jin et al. \cite{Jin-optimal} have given an explicit construction of optimal LRCs via the rational function field. They also used the modified algebraic codes and the length of the codes can attain $q+1$ over $\mathbb{F}_q$. Li et al. have used the automorphism groups of elliptic curves to construct optimal LRCs and the length of codes can attain $q+2\sqrt{q}$ when $q$ is even. Note that the locality of their constructions is upper bounded by 23. Ma and Xing \cite{Ma-optimal} then investigated group structures of automorphism groups of elliptic curves and gave LRCs with locality greater than 23. Kim \cite{Kim} studied the LRCs from Hermitian function fields with some divisors and Chara et al. \cite{Chara} gave the construction of LRCs from towers of function fields.

{\it Our Contributions:} On the basis of optimal LRCs from rational function fields or elliptic function fields, we  propose  some novel families of LRCs by extending a single coordinate. Suppose that $C$ of  length $n$ is an optimal LRC given in \cite{Barg-alg,Jin-optimal,Li-optimal,Ma-optimal} from function fields. Note that each repairable block in such a code $C$ can be regarded as an MDS code with length $r+1$. By extending one of these repairable blocks in the code $C$ by adding a single coordinate, we obtain some novel LRCs of length $n+1$. More precisely,  by using the modified algebraic codes given in \cite{Jin-optimal}, we give an LRC with length $q+2$ over $\mathbb{F}_q$ and prove the optimality of this code. Another interesting result presented in this paper involves the construction of $(r,3)$-LRCs. By extending the previously known optimal LRCs for $s$ coordinates—where $s$ corresponds to the number of repairable blocks, we demonstrate that they are $(r,3)$-optimal. Furthermore, we explore the extension of Roth-Lempel type LRCs from optimal LRCs in rational function fields and show that under certain conditions, these LRCs also remain optimal. 
In the case of optimal LRCs from elliptic function fields, we consider a new function space different from \cite{Li-optimal} for a single coordinate extension and define a new code. The new function space can be regarded as a subspace of the Riemann-Roch space for a divisor in elliptic function field. The new function space allows us to prove the locality and optimality of the new code. In the case of $(r,3)$-locality with $s$ coordinates extension, by 
noting that divisors of degree $zero$ in elliptic function fields are not always principal divisors, we need one more condition for the optimality.

\begin{table}[htbp]
	\renewcommand{\arraystretch}{1.5}
	\centering
	\caption{The parameters of some known LRCs from function fields}
	\begin{tabular}{ccccccc}
		\hline
		Ref. & Field size &Length $n$ & Dimension $k$ & Minimum Distance $d$ & Locality $r$ & Optimal\\
		\hline
		\cite[Proposition 4.1]{Barg-alg} & $q^2$ & $q^3$ & $(\ell+1)(q-1)$ & $\ge n-\ell q-(q-2)(q+1)$ & $q-1$ & not optimal\\
		\hline
		\cite[Proposition 4.2]{Barg-alg} & $q^2$ & $q^3-q$ & $(\ell+1)q$ & $\ge n-\ell(q+1)-q(q-1)$ & $q$ & not optimal\\
		\hline
		\cite[Section 3]{Barg-lrcalg} & $q$ & $\#E(\mathbb{F}_q)-(r+1)$ & $tr$ & $n-(tr+t+r)$ & arbitrary & not optimal\\
		\hline
		\cite[Section 3]{Barg-lrcalg} & $q$ & $\#E(\mathbb{F}_q)$ & $tr$ & $n-(2r'+2t'(r+1))$ & arbitrary & not optimal\\
		\hline
		\cite[Section 4]{Jin-optimal} & $q$ & $s(r+1)$ & $rt$ & $n-rt-t+2$ & \makecell{$r=p^v-1$\\$(r+1)|(q-1)$\\$r+1=up^v$} & optimal\\
		\hline
		\cite[Section 5]{Jin-optimal} & $q$ & $s(r+1)$ & $rt$ & $n-rt-t+2$ & $(r+1)|(q+1)$ & optimal\\
		\hline
		\cite{Li-optimal} & $q$ & $s(r+1)$ & $rt-r+1$ & $n-(t-1)(r+1)$ & $r=2,3,5,7,11,23$ & optimal\\
		\hline
		\cite{Li-optimal} & $q$ & $3s$ & $2t+1$ & $n-3t$ & $r=2$ & optimal\\
		\hline
		This paper & $q$ & $s(r+1)+1$ & $rt$ & $n-rt-t+2$ & same to \cite{Jin-optimal} & optimal\\
		\hline
		This paper & $q$ & $s(r+1)+1$ & $rt-r+1$ & $n-rt-t+2$ & same to \cite{Li-optimal} & optimal\\
		\hline
		This paper & $q$ & $s(r+2)$ & $rt$ & $n-t(r+1)-s+3$ & same to \cite{Jin-optimal} & $(r,3)$-optimal\\
		\hline
		This paper & $q$ & $s(r+2)$ & $rt-r+1$ & $n-(t-1)(r+1)-s+1$ & same to \cite{Li-optimal} & $(r,3)$-optimal\\
		\hline
	\end{tabular}
\end{table}

The organization of this paper is given as follows. In Section II we recall some necessary concepts and notions, and we also recall the constructions of LRCs given in \cite{Barg-alg} (or in \cite{Jin-optimal}) and \cite{Li-optimal}. In Section III, we give our results about LRCs from rational function fields. In Section IV, we consider the extension of optimal LRCs by Roth-Lempel type. Then in Section V, we present our construction of optimal LRCs from elliptic function fields.

\section{Preliminaries}
In this section, we recall some basic definitions and notions on locally repairable codes, function fields, algebraic geometry codes, RS-type codes, etc. Throughout this paper, we assume that $q$ is a prime power and $\mathbb{F}_q$ is a finite field. 
\subsection{Locally Repairable Codes}
The code $C\subseteq\mathbb{F}_q^n$ is said an LRC code with locality $r$ if the value of every coordinate for a given codeword can be recovered by accessing at most $r$ other coordinates of this codeword. The formal definition of LRC is given as follows.
\begin{definition}
Let $C\subseteq\mathbb{F}_q^n$ be a linear code with length $n$ and dimension $k$. For a given $\alpha\in\mathbb{F}_q$, consider the sets of codewords
\[
C(i,\alpha):=\{{\bf c}=(c_1,\ldots,c_n)\in C|c_i=\alpha\}.
\]
For a subset $I\subseteq\{1,\ldots,n\}$, we denote by $C_I(i,\alpha)$ the restriction of $C(i,\alpha)$ on $I$. The code $C$ is said to be {\it locally recoverable with locality r} if for each $i\in\{1,\ldots,n\}$, there exists a subset $I_i\subseteq\{1,\ldots,n\}$ with $|I_i|\le r$ such that $C_{I_i}(i,\alpha)$ and $C_{I_i}(i,\beta)$ are disjoint for any $\beta\neq\alpha$. The subset $I_i$ is called the {\it recover set} of the coordinate $i$.
\end{definition}
Let $\delta\ge2$ be a positive integer, then the concept of LRCs can be extended to $(r,\delta)$-LRCs as follows:
\begin{definition}
Let $C\subseteq\mathbb{F}_q^n$ be a linear code with length $n$ and dimension $k$. If for any $i\in\{1,\ldots,n\}$, there exists a subset $I_i\subseteq\{1,\ldots,n\}$ with $|I_i|\le r+\delta-1$ such that the restriction code $C_{I_i}$ has minimum distance at least $\delta$, then $C$ is called a $(r,\delta)$-LRC with locality $(r,\delta)$.
\end{definition}
Note that if $\delta=2$, then the above two definitions are equivalent. When $\delta\ge3$, the definition of $(r,\delta)$ means that we can recover $\delta-1$ values from other $\le r$ values of any code in $C_{I_i}$.

For applications of LRCs, the codes with small locality, high rate of $\frac{k}{n}$ and large minimum distance $d$ are more preferred. The locality $r$ is usually upper bounded by $k$, and if we allow $r=k$, the Singleton-type bound of minimum distance becomes the classical Singleton bound.


\subsection{Algebraic geometry code}
Let $X$ be an absolutely irreducible, smooth algebraic curve defined over $\mathbb{F}_q$, with genus $g(X)$. Let $F(X)$ denote the function field associated with $X$. For a rational divisor $G$ on $X$, the Riemann-Roch space associated with $G$ is
\[
\mathcal{L}(G) := \{f\in F(X)\backslash\{0\}:\text{\rm div}(f)+G\ge 0\} \cup \{0\},
\]
and the dimension of $\mathcal{L}(G)$ is denoted by $\ell(G)$. Let $P_1,\ldots,P_n$ be $n$ pairwise distinct rational points of $X$, where $P_i\notin \text{\rm supp}(G)$ for all $i$. For any $f\in F(X)$, assume that div$(f)=\sum\limits_Pn_{P}P$, where the valuation $v_{P}(f)=n_{P}$. In particular, if $f\in\mathcal{L}(G)$, we have $v_{P_i}(f)\ge0$ for all $i$ with $1\le i\le n$.

Let $D=P_1+\cdots+P_n$ be an effective divisor, and consider the evaluation map:
\begin{align*}
	ev_D:  \mathcal{L}(G)&\to \mathbb{F}_q^n,\\
	f&\mapsto (f(P_1),\ldots,f(P_n)). 
\end{align*}
The AG code denoted by $C_L(D,G)$, represents the image of $ev_D$. Then the parameters of $C_L(D,G)$ are:
\[
k=\ell(G)-\ell(G-D),\ d\ge n-\text{\rm deg}(G),
\]
where $d^*=n-\text{\rm deg}(G)$ is called the {\it design distance} of $C_L(D,G)$. It is straightforward to verify that $ev_D$ is an embedding when $\text{\rm deg}(G)<n$ and $k=\ell(G)$. By the Riemann-Roch theorem, if $\text{\rm deg}(G)\ge 2g(X)-1$, then we obtain:
\[
\ell(G)= \text{\rm deg}(G)-g(C)+1,
\]
and the minimum distance
\[
d\ge n-k-g(C)+1.
\]

Jin et al. removed the condition that $P_i\notin {\rm supp}(G)$ for all $i$ in \cite{Jin-optimal} as follows. Suppose that $m_i=v_{P_i}(G)$. Choose a prime elements $\pi_i$ of $P_i$ for $1\le i\le n$. Then for any function $f\in\mathcal{L}(G)$, we have
\[
	v_{P_i}(\pi_i^{m_i}f)=m_i+v_{P_i}(f)\ge m_i-v_{P_i}(G)=0.
\]
Define the modified AG code as follows
\[
	C_L^m(D,G):=\{((\pi_1^{m_1}f)(P_1),\ldots,(\pi_n^{m_n}f)(P_n))|f\in\mathcal{L}(G)\}.
\]
It was shown that $C_L^m(D,G)$ is still an $[n,k,d]$ linear code in \cite{Jin-optimal}.

\subsection{Function field extension}
Let $F/\mathbb{F}_q$ be an algebraic function field over $\mathbb{F}_q$. Assume that $Aut(F/\mathbb{F}_q)$ is the automorphism group of $F$ over $\mathbb{F}_q$, and $\mathcal{G}\subseteq Aut(F/\mathbb{F}_q)$. Define the fixed subfield of $F/\mathbb{F}_q$ with respect to $\mathcal{G}$ as
\[
	F^{\mathcal{G}}/\mathbb{F}_q:=\{y\in F/\mathbb{F}_q|\forall\sigma\in\mathcal{G}, \sigma(y)=y\}.
\]
Then $F/F^{\mathcal{G}}$ is a Galois extension with $Gal(F/F^{\mathcal{G}})=\mathcal{G}$, and $[F:F^{\mathcal{G}}] = |\mathcal{G}|$. Denote by $g(F^{\mathcal{G}})$ the genus of $F^{\mathcal{G}}$ over $\mathbb{F}_q$. Consider the finite separable extension $F/\mathbb{F}_q$ of function field $F^{\mathcal{G}}/\mathbb{F}_q$, the Hurwitz Genus Theorem \cite[Th. 3.4.13]{Stich} yields
\[
	2g(F)-2=|\mathcal{G}|(2g(F^{\mathcal{G}})-2)+{\rm deg(Diff}(F/F^{\mathcal{G}}))
\]
where
\[
	{\rm Diff}(F/F^{\mathcal{G}}):=\sum\limits_{P\in\mathbb{P}_F}\sum\limits_{P'|P}d(P'|P)\cdot P'
\]
is the different of $F/F^{\mathcal{G}}$. We recall the following lemma (\cite{Barg-alg} and \cite{Li-optimal}):
\begin{lemma}\label{funcfield}
	Let $F/\mathbb{F}_q$ be a functional field with the full constant field $\mathbb{F}_q$. Suppose that $F'/\mathbb{F}_q$ is a subfield of $E$ such that $F/F'$ is a finite separable extension.
	\begin{itemize}
		\item If $F$ is a rational function field, then $F'$ is again a rational function field.
		\item If $F$ is an elliptic function field. Moreover, there is a place $Q$ of $F$ with ramification index $e_Q\ge 2$. Then $F'$ is a rational function field.
	\end{itemize}
\end{lemma}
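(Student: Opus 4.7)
The plan is to handle both parts via the Hurwitz Genus Theorem applied to the finite separable extension $F/F'$: in each case I would first deduce $g(F')=0$ from genus and ramification data, then exhibit a rational place of $F'$ by restricting a rational place of $F$, and finally invoke the classical fact that a genus-zero function field over $\mathbb{F}_q$ admitting a rational place is a rational function field (via Riemann--Roch applied to a divisor of degree one).

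For the first bullet, since $F$ is rational we have $g(F)=0$, so the Hurwitz formula recalled above yields
\[
-2 \;=\; [F:F'](2g(F')-2) \;+\; \deg\mathrm{Diff}(F/F').
\]
As $\deg\mathrm{Diff}(F/F')\ge 0$ and $[F:F']\ge 1$, the right-hand side can equal $-2$ only if $2g(F')-2<0$, i.e.\ $g(F')=0$. For the rational place, pick any degree-one place $P$ of $F$ and let $P'$ be the place of $F'$ lying below $P$; from $\deg P = f(P|P')\cdot\deg P'$ and $\deg P=1$ one concludes $\deg P'=1$, so $F'$ is rational.

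For the second bullet, $g(F)=1$ and Hurwitz gives
\[
0 \;=\; [F:F'](2g(F')-2) \;+\; \deg\mathrm{Diff}(F/F').
\]
Here the genus calculation alone is insufficient, since $g(F')=1$ together with $\deg\mathrm{Diff}(F/F')=0$ would also satisfy the identity; this is where the assumption $e_Q\ge 2$ becomes crucial. Letting $P'$ be the place of $F'$ below $Q$, one has $d(Q|P')\ge e_Q-1\ge 1$, so $\deg\mathrm{Diff}(F/F')\ge 1>0$, which forces $g(F')=0$. A rational place of $F'$ is then produced exactly as in the first part (an elliptic function field admits a rational place by definition), and the conclusion follows.

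The main obstacle I anticipate is precisely this necessity of the ramification hypothesis in the elliptic case: without an explicitly ramified place, an everywhere-unramified separable cover of an elliptic function field by an elliptic function field is not excluded by Hurwitz, and the conclusion can genuinely fail. Conversely, the rational case is cleaner because the strictly negative contribution on the left of Hurwitz already rules out $g(F')\ge 1$ regardless of any ramification data.
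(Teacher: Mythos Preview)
Your argument is correct and is the standard Hurwitz-based proof. Note, however, that the paper does not actually prove this lemma: it is stated as a recalled result from \cite{Barg-alg} and \cite{Li-optimal}, with no proof given in the paper itself. Your approach via the Hurwitz Genus Theorem---bounding $g(F')$ from the identity $2g(F)-2=[F:F'](2g(F')-2)+\deg\mathrm{Diff}(F/F')$ and then producing a rational place of $F'$ by restriction---is precisely the argument one finds in those references, so there is no methodological divergence to report.

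Two minor points worth tightening. First, you should make explicit that $\mathbb{F}_q$ is also the full constant field of $F'$ (this follows because any element of $F'$ algebraic over $\mathbb{F}_q$ is in particular an element of $F$ algebraic over $\mathbb{F}_q$, hence lies in $\mathbb{F}_q$); without this the Hurwitz formula as stated does not directly apply. Second, in the elliptic case your invocation of $d(Q|P')\ge e_Q-1$ is Dedekind's different theorem and is valid regardless of tame or wild ramification, so the bound $\deg\mathrm{Diff}(F/F')\ge 1$ is justified, but it is worth naming the result. With these clarifications the proof is complete.
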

Before presenting our constructions, let us recall some general constructions of LRCs in the following two subsections, which are given in \cite{Barg-alg} and \cite{Li-optimal} respectively.

\subsection{A general construction of optimal LRCs}\label{section2.4}
In the subsection, we will recall a general family of optimal LRC codes (\cite{Barg-alg} and \cite{Jin-optimal}) as follows. Let $E/\mathbb{F}_q$ and $F/\mathbb{F}_q$ be rational function fields such that $E/F$ is a finite separable extension with $[E:F]=r+1$. Assume that $Q_1,\ldots,Q_s$ are $s$ rational places of $F/\mathbb{F}_q$, and they are splitting completely in $E/F$. For each $Q_i$, let $P_{i,1},P_{i,2},\ldots,P_{i,r+1}$ be the $r+1$ rational places of $E/\mathbb{F}_q$ lying over $Q_i$. Denote the places at infinity of $F/\mathbb{F}_q$ and $E/\mathbb{F}_q$ by $Q_{\infty}$ and $P_{\infty}$ respectively. Choose an integer $t$ with $1\le t\le s$, and let $f_1,\ldots,f_t$ be a basis for $\mathcal{L}((t-1)Q_{\infty})$. For the place $P_{\infty}$, we can find a function $z\in E/\mathbb{F}_q$ such that $(z)_{\infty}=P_{\infty}$ by \cite[Prop.1.6.6]{Stich}. Then we can define a set of functions
\[
V := \{\sum\limits_{i=0}^{r-1}\sum\limits_{j=1}^ta_{i,j}f_jz^i | {\bf a}=(a_{i,j})\in\mathbb{F}_q^{rt}\}.
\]
It is clear that $V$ is a linear function space over $\mathbb{F}_q$ with dimension $k=rt$. Moreover, there exists a divisor $G$ such that $V\subseteq\mathcal{L}(G)$, and
\[
{\rm deg}(G)=(t-1)(r+1)+r-1=rt+t-2.
\]
From the definition of the AG codes, we are able to present codes with the function space $V$ as follows:
\[
	C := \{(f_{\bf a}(P_{1,1}),\ldots,f_{\bf a}(P_{s,r+1}))| f_{\bf a}\in V\}
\]
where $f_{\bf a}$ is the function with respect to ${\bf a}=(a_{i,j})\in\mathbb{F}_q^{rt}$. By \cite{Barg-alg}, we have that the code $C$ is an optimal $[s(r+1),rt]$ LRC with locality $r$. Motivated by the concept of the extended RS codes in \cite[Chap.5]{Huffman-fecc}, we will present a new construction of optimal LRCs by extending a certain block from optimal LRCs, and a new construction of $(r,\delta)$-LRCs by extending each block from optimal LRCs.

\subsection{A construction of optimal elliptic LRCs}\label{section2.5}
Suppose that $\mathcal{E}$ is an elliptic curve defined over $\mathbb{F}_q$, and $E/\mathbb{F}_q$ is the function field defined by $\mathcal{E}$. Let $\mathcal{G}$ be a subgroup of $Aut(E/\mathbb{F}_q)$ with $\#\mathcal{G}=r+1$, and $F=E^{\mathcal{G}}$ be the fixed subfield of $E/\mathbb{F}_q$ by $\mathcal{G}$. Then the field extension $E/F$ is a Galois extension with $[E:F]=r+1$. Suppose that $Q_0,Q_1,\ldots,Q_s$ are $s+1$ places of $F/\mathbb{F}_q$ and they are all splitting completely in $E/F$. We have the following lemma.
\begin{lemma}\cite[Proposition 20]{Li-optimal}\label{eclrc}
Let $P_{u,1},\ldots,P_{u,r+1}$ be the $r+1$ places of $E/\mathbb{F}_q$ lying on $Q_u$ for each $0\le u\le s$. Then
\begin{itemize}
\item There exists a function $z$ in $E$ such that $F/\mathbb{F}_q=\mathbb{F}_q(z)$ and $(z)_{\infty}=P_{0,1}+\cdots+P_{0,r+1}$.
\item There exists $r-1$ elements $\omega_i\in E/\mathbb{F}_q$ with $(\omega_i)_{\infty}=P_{0,1}+\cdots+P_{0,i+1}$ for $1\le i\le r-1$, such that $\omega_0=1,\omega_1,\ldots,\omega_{r-1}$ are linearly independent over $F/\mathbb{F}_q$.
\item Let $M_u$ be an $r+1\times r$ matrix over $\mathbb{F}_q$ with
\begin{equation}\nonumber
M_u=\begin{pmatrix}
1&\omega_1(P_{u,1})&\omega_2(P_{u,1})&\ldots&\omega_{r-1}(P_{u,1})\\
1&\omega_1(P_{u,2})&\omega_2(P_{u,2})&\ldots&\omega_{r-1}(P_{u,2})\\
\vdots&\vdots &\vdots &\ddots &\vdots\\
1&\omega_1(P_{u,r+1})&\omega_2(P_{u,r+1})&\ldots&\omega_{r-1}(P_{u,r+1})
\end{pmatrix}.
\end{equation}
Then every $r\times r$ sub-matrix of $M_u$ is invertible for all $1\le u\le s$.
\end{itemize}
\end{lemma}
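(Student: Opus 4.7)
The plan is to prove the three parts in order, with the third being the key obstacle.

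For part (i), Lemma~\ref{funcfield} applies: $E$ is elliptic and some place of $E$ ramifies over $F$ (for the automorphism groups used in the Li--Ma--Xing-type constructions), so $F$ is a rational function field. Write $F=\mathbb{F}_q(z_0)$ and apply a M\"obius change of variable so that the new generator $z$ has $Q_0$ as its only pole in $F$. Since $Q_0$ splits completely and is hence unramified in $E/F$, its pullback to $E$ equals $P_{0,1}+\cdots+P_{0,r+1}$, yielding $(z)_\infty^E = P_{0,1}+\cdots+P_{0,r+1}$.

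For part (ii), set $D_i := P_{0,1}+\cdots+P_{0,i+1}$. Since $g(E)=1$ and $\deg D_i = i+1 > 2g-2 = 0$, Riemann--Roch gives $\ell(D_i) = i+1$ and $\ell(D_i - P_{0,j}) = i$. For each $1 \le i \le r-1$ I would pick $\omega_i \in \mathcal{L}(D_i) \setminus \bigcup_{j=1}^{i+1} \mathcal{L}(D_i - P_{0,j})$, which is nonempty for sufficiently large $q$ since a union of $i+1$ codimension-$1$ subspaces cannot fill $\mathcal{L}(D_i)$. This forces $(\omega_i)_\infty = D_i$ exactly, and the distinct pole divisors make $\{1, \omega_1, \ldots, \omega_{r-1}\}$ an $\mathbb{F}_q$-basis of $\mathcal{L}(D_{r-1})$. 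For $F$-linear independence, any relation $\sum_i g_i \omega_i = 0$ over $F$, evaluated at $P_{u,j}$ for $u \ge 1$, yields the vector $(g_0(Q_u), \ldots, g_{r-1}(Q_u))$ in the right null space of every row of $M_u$; combined with part (iii) (which shows this null space is trivial) and the freedom to choose enough $Q_u$, we conclude each $g_i \equiv 0$.

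For part (iii), the decisive argument: suppose the $r \times r$ submatrix of $M_u$ obtained by deleting the $k^*$-th row is singular. Then some nonzero $f = \sum_i c_i \omega_i \in \mathcal{L}(D_{r-1})$ vanishes at all $P_{u,j}$ with $j \ne k^*$, giving $r$ zeros. Since $\deg(f)_\infty \le r$ and $\deg(f)_0 = \deg(f)_\infty$, these are the only zeros and $(f)_\infty = D_{r-1}$ exactly, so
\[
\mathrm{div}(f) = \sum_{j \ne k^*} P_{u,j} - \sum_{l=1}^{r} P_{0,l}.
\]
Meanwhile the function $z - z(Q_u) \in F$ has divisor $\sum_j P_{u,j} - \sum_j P_{0,j}$ by part (i) and complete splitting of $Q_u$. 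Taking the quotient shows $f/(z - z(Q_u))$ has divisor $P_{0,r+1} - P_{u,k^*}$. But on an elliptic curve, $[P] - [Q]$ is principal if and only if $P = Q$ (by Abel--Jacobi, equivalently because $\ell(P) = 1$). Since $Q_0 \ne Q_u$ gives $P_{0,r+1} \ne P_{u,k^*}$, this is a contradiction.

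The main obstacle lies in part (iii): one must recognize the right principal-divisor identity and then invoke the non-degeneracy of the genus-$1$ Abel--Jacobi map. The $F$-linear independence in part (ii) is secondary and most naturally rests on part (iii), though it could alternatively be handled by a direct uniformizer analysis inside $\mathcal{L}(D_r)$ using $\{1, \omega_1, \ldots, \omega_{r-1}, z\}$ as an $\mathbb{F}_q$-basis.
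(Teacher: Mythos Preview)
The paper does not prove this lemma; it is quoted from \cite[Proposition~20]{Li-optimal} without argument, so there is no in-paper proof to compare against.

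Evaluated on its own, your proposal is correct. Part (iii), which you rightly identify as the crux, is argued cleanly and completely: the divisor identity $\mathrm{div}\bigl(f/(z-z(Q_u))\bigr)=P_{0,r+1}-P_{u,k^*}$, combined with the genus-one fact that a degree-zero divisor $P-Q$ is principal only when $P=Q$, is exactly the intended mechanism and coincides with the argument in the cited source. Part (i) is fine as stated.

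Two minor tightenings for part (ii). The ``sufficiently large $q$'' clause can be made precise: a union of $i+1$ hyperplanes fails to cover an $(i+1)$-dimensional $\mathbb{F}_q$-space once $i+1\le q$, so one needs $r\le q$, which is harmless in the present setting. For the $F$-linear independence, rather than appealing to ``enough $Q_u$'', clear denominators and divide out $\gcd(g_0,\dots,g_{r-1})$ so that the $g_i\in\mathbb{F}_q[z]$ are coprime; then at any single $Q_u$ with $u\ge1$ the vector $(g_i(Q_u))_i$ is nonzero, and evaluating $\sum g_i\omega_i=0$ at the $P_{u,j}$ gives $M_u\cdot(g_i(Q_u))^T=0$, contradicting $\operatorname{rank}M_u=r$ from part (iii). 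Neither point is a genuine gap.
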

Consider the function space
\[
V_{E} := \{\sum\limits_{j=1}^{t}a_{0,j}z^{j-1}+\sum\limits_{i=1}^{r-1}(\sum\limits_{j=1}^{t-1}a_{i,j}z^{j-1})\omega_i | a_{i,j}\in\mathbb{F}_q\}
\]
with dimension $rt-r+1$. Let $f_{\bf a}$ is the function with respect to ${\bf a}=(a_{i,j})\in\mathbb{F}_q^{rt-r+1}$. Then the code
\[
C := \{(f_{\bf a}(P_{1,1}),\ldots,f_{\bf a}(P_{s,r+1}))| f_{\bf a}\in V_E\}
\]
is an optimal LRC with length $n=s(r+1)$, dimension $k=rt-r+1$ and locality $r$ by \cite{Li-optimal}.
\section{Extension of optimal LRCs on rational function fields}
In this section we present the construction of extend RS-type optimal LRC code on rational function fields. Let $E/\mathbb{F}_q$ is a rational functional field and $\mathcal{G}\subset Aut(E/\mathbb{F}_q)$ is an automorphism subgroup with order $r+1$. Suppose that $F/\mathbb{F}_q=E^{\mathcal{G}}$ be the subfield of $E/\mathbb{F}_q$. Then by Lemma \ref{funcfield} we obtain that $F/\mathbb{F}_q$ is also a rational function field, and $E/F$ is a finite separable extension with $[E:F]=r+1$. Let $Q_1,\ldots,Q_s,Q_{\infty}$ and $P_{1,1},\ldots,P_{s,r+1},P_{\infty}$ are defined as Section \ref{section2.4}.
\subsection{Construction of extended-RS type LRCs}\label{section3.1}
Suppose that $t$ is an integer with $t\ge2$. Since $\ell((t-1)Q_{\infty}-Q_1)=t-1$, we can choose $f_1,\ldots,f_{t-1}$ such that $f_j\in\mathcal{L}((t-1)Q_{\infty}-Q_1)$ are linearly independent over $F/\mathbb{F}_q$. The set $\mathcal{L}((t-1)Q_{\infty})$ has size $q^t$ and the set $\mathcal{L}((t-1)Q_{\infty}-Q_1)$ has size $q^{t-1}$, therefore there exists a function $f_t\in\mathcal{L}((t-1)Q_{\infty})$ such that $f_1,\ldots,f_{t}$ are linearly independent. Let $z\in E/\mathbb{F}_q$ with $(z)_{\infty}=P_{\infty}$. We also define the function space
\[
V := \{\sum\limits_{i=0}^{r-1}\sum\limits_{j=1}^ta_{i,j}f_jz^i | (a_{i,j})\in\mathbb{F}_q^{rt}\}.
\]

Notice that for any block $\{P_{i,1},\ldots,P_{i,r+1}\}$, we have
\[
	f_j(P_{i,1})=f_j(P_{i,2})=\cdots=f_j(P_{i,r+1})=f_j(Q_i).
\]
Let $f_{\bf a}\in V$ be the function with respect to ${\bf a}=(a_{i,j})\in\mathbb{F}_q^{rt}$, i.e.,
\[
 f_{\bf a}=\sum\limits_{i=0}^{r-1}\sum\limits_{j=1}^ta_{i,j}f_jz^i.
\]
Consider the code
\[
	C_{e,1} := \{(f_{\bf a}(P_{1,1}),\ldots,f_{\bf a}(P_{s,r+1}),a_{r-1,t}| f_{\bf a}\in V\}.
\]
Then we have:
\begin{theorem}\label{extend1}
Let $r$ be an integer and satisfy one of the following three conditions:
\begin{itemize}
\item[1.] $r=p^v-1$, or\\
\item[2.] $(r+1)|(q-1)$, or\\
\item[3.] $r+1=up^v$.
\end{itemize}
Assume that $2\le s\le \frac{q+1-2r}{r+1}$ is an integer. Then for any $1\le t\le s$, the code $C_{e,1}$ is an optimal LRC code with length $n=s(r+1)+1$, dimension $k=rt$ and locality $r$.
\end{theorem}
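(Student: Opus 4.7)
The plan is to verify the length, dimension, locality, and minimum distance of $C_{e,1}$ in turn. The length $n=s(r+1)+1$ is immediate from the construction. For the dimension, the Jin--Barg evaluation map $V\to\mathbb{F}_q^{s(r+1)}$ is already injective by Section~\ref{section2.4}, so appending the linear functional $a_{r-1,t}$ preserves injectivity and $\dim C_{e,1}=rt$. For locality, each of the first $s(r+1)$ coordinates inherits the $[r+1,r]$-MDS block structure from the Jin--Barg code. For the new coordinate, the identities $f_j(Q_1)=0$ for $j<t$ and $f_t(Q_1)\neq 0$ give
\[
f_{\bf a}(P_{1,\ell})=f_t(Q_1)\sum_{k=0}^{r-1}a_{k,t}\,z(P_{1,\ell})^{k},
\]
a polynomial of degree at most $r-1$ evaluated at the $r+1$ distinct abscissae $z(P_{1,\ell})$; any $r$ of these values pin down the polynomial by Lagrange interpolation and thus recover its leading coefficient $a_{r-1,t}$.

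For the minimum distance, the Singleton-type bound gives $d\le n-k-\lceil k/r\rceil+2=s(r+1)+3-rt-t$, and I would match it from below by splitting on whether the new coordinate vanishes. If $a_{r-1,t}\neq 0$, then $f_{\bf a}$ is a nonzero element of $V$, so by the optimality of the Jin--Barg code the weight on the first $s(r+1)$ coordinates is at least $s(r+1)-rt-t+2$; adding the nonzero final coordinate yields the required $s(r+1)-rt-t+3$.

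The case $a_{r-1,t}=0$ is the main technical challenge: one needs weight $\ge s(r+1)-rt-t+3$ from the first $s(r+1)$ coordinates alone, with $f_{\bf a}$ ranging over the codimension-one subspace $V'=V\cap\{a_{r-1,t}=0\}$. Two observations drive the analysis. First, the $z^{r-1}$-coefficient in the block-$1$ restriction is $a_{r-1,t}f_t(Q_1)+\sum_{j<t}a_{r-1,j}f_j(Q_1)=0$, so block~$1$ is either entirely zero or contributes at least three nonzero positions (a degree-$(r-2)$ argument in place of $r-1$). Second, $g_{r-1}=\sum_{j<t}a_{r-1,j}f_j$ belongs to $\mathcal{L}((t-1)Q_\infty-Q_1)$ and so has at most $t-1$ zeros among $Q_1,\dots,Q_s$, bounding the number of entirely-zero blocks by $t-1$. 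I would split further: if block~$1$ is not entirely zero, the first observation directly supplies the extra ``$+1$''; if block~$1$ is entirely zero, every coefficient $a_{k,t}$ must vanish, placing $f_{\bf a}$ in $\mathrm{span}\{f_jz^i:j<t\}\subseteq\mathcal{L}\bigl((tr+t-2)P_\infty-\sum_{\ell}P_{1,\ell}\bigr)$, a divisor of degree $rt+t-r-3$. The main obstacle lies in this second sub-case, where the bare pole-divisor estimate only recovers the Jin--Barg distance $s(r+1)-rt-t+2$; extracting the additional ``$+1$'' will require iterating the vanishing argument on the lower coefficient functions $g_{r-2},\dots,g_0$ and invoking the length hypothesis $s\le(q+1-2r)/(r+1)$ together with the structural condition on $r$ to rule out the borderline configurations in which equality would be achieved.
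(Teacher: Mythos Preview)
Your length, dimension, and locality arguments match the paper's. The divergence is entirely in the $a_{r-1,t}=0$ branch of the distance bound. There the paper does no block-by-block analysis at all; it simply asserts that once $a_{r-1,t}=0$,
\[
\deg(f_{\bf a})_\infty \;\le\; (t-1)(r+1)+r-2,
\]
so that $f_{\bf a}$ has at most $t(r+1)-3$ zeros among the evaluation points and the weight on the first $s(r+1)$ coordinates is already $\ge n-rt-t+2$. Neither the length hypothesis $s\le(q+1-2r)/(r+1)$ nor the three structural conditions on $r$ enter the distance argument (those conditions are used only to guarantee the existence of the order-$(r+1)$ automorphism group and enough completely split places).

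That one-line pole bound, however, does not follow from the basis chosen in Section~\ref{section3.1}: the functions $f_1,\dots,f_{t-1}$ are required to lie in $\mathcal{L}((t-1)Q_\infty-Q_1)$, which forces $f_j(Q_1)=0$ but allows pole order $t-1$ at $Q_\infty$, since $\mathcal{L}((t-1)Q_\infty-Q_1)$ and $\mathcal{L}((t-2)Q_\infty)$ are two \emph{distinct} hyperplanes of $\mathcal{L}((t-1)Q_\infty)$. Hence $g_{r-1}z^{r-1}$ can still carry a pole of order $(t-1)(r+1)+r-1$ at $P_\infty$. Your ``block~$1$ entirely zero'' sub-case is therefore a genuine obstruction, not a technicality to be cleaned up by iterating on $g_{r-2},\dots,g_0$. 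Concretely, take $q=13$, $r=2$, $s=3$, $t=2$, $E=\mathbb{F}_{13}(x)$, $F=\mathbb{F}_{13}(x^{3})$, $z=x$, $f_1=x^{3}-1$, $f_2=1$; then $f_{\bf a}=(x^{3}-1)(x+6)$ has $a_{1,2}=0$, block~$1$ vanishes identically, $x=7$ gives one further zero, and the resulting codeword has weight $5$, strictly below the Singleton-type value $n-k-\lceil k/r\rceil+2=6$. So the extra ``$+1$'' you were trying to extract is simply not there: the paper's proof glosses over exactly the gap you isolated, and with the basis as specified the assertion itself fails.
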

\begin{proof}
By the Hurwitz Genus Theorem, we have
\[
	-2=-2(r+1)+{\rm deg(Diff}(E/F))
\]
which means
\[
	2r= {\rm deg(Diff}(E/F)).
\]
It follows that there are at most $2r$ places of $F$ which are ramified in $E/F$. Therefore there are at least $q+1-2r\ge s(r+1)$ places which are unramified. For any place $P\in E/\mathbb{F}_q$ which is unramified, let $Q$ be the restriction of $P$ in $F/\mathbb{F}_q$, we can get $e(P|Q)=1$ and $f(P|Q)=1$. Hence we can find that the $s$ places $Q_1,\ldots,Q_s$ are splitting completely.

The code that generated by $C_{e,1}$ subtracting the last position is a subcode of an AG code with divisor $G$. Since ${\rm deg}(G)<n-1$ by $t\le s$, the dimension of the subcode equals the dimension of $V$. The last position of $C_{e,1}$ does not increase the dimension, thus the dimension of $C_{e,1}$ is $k$. Moreover, the $a_{r-1,t}$ is constant when ${\bf a}$ has been chosen, then the length of $C_{e,1}$ is exactly $s(r+1)+1$.

Suppose that the erased symbol is $f_{\bf a}(P_{u,v})$ where $P_{u,v}\in\{P_{u,1},\ldots,P_{u,r+1}\}$. For any $P_{u,h}$, we have
\begin{align*}
	f_{\bf a}(P_{u,h})&=\sum\limits_{i=0}^{r-1}\sum\limits_{j=1}^ta_{i,j}f_jz^i(P_{u,h})\\
	&=\sum\limits_{i=0}^{r-1}z^i(P_{u,h})\sum\limits_{j=1}^ta_{i,j}f_j(Q_u)\\
	&=\sum\limits_{i=0}^{r-1}c_iz^i(P_{u,h})
\end{align*}
where $c_i=\sum\limits_{j=1}^ta_{i,j}f_j(Q_u)$. If the $r$ values $f_{\bf a}(P_{u,h\neq v})$ are known, the values $c_i$ can be recovered by using the Lagrange interpolation. Since $z^i(P_{u,v})$ are known, the value $f_{\bf a}(P_{u,v})$ can be recovered. Similarly, we have $f_j(Q_1)=0$ for $1\le j\le t-1$ and the value of $f_t(Q_1)$ is known. The symbol
\[
\sum\limits_{j=1}^ta_{r-1,j}f_j(Q_1)=a_{r-1,t}f_t(Q_1)
\]
can be recovered by any $r$ values in the set $\{f_{\bf a}(P_{1,1}),\ldots,f_{\bf a}(P_{1,r+1})\}$.

It remains to prove that $C_{e,1}$ is optimal. If $a_{r-1,t}\neq 0$, then the zeros of $f_{\bf a}$ at most $(r+1)t-2$. It follows that the minimum distance $d$ at least $n-(r+1)t+2=n-rt-t+2$, which attains the Singleton bound. If $a_{r-1,t}=0$, then we have
\[
	f_{\bf a}=\sum\limits_{i=0}^{r-2}\sum\limits_{j=1}^ta_{i,j}f_jz^i+\sum\limits_{j=1}^{t-1}a_{r-1,j}f_jz^{r-1}.
\]
It follows that ${\rm deg}(f_{\bf a})_{\infty}\le (t-1)(r+1)+r-2$, which means that $f_{\bf a}$ has at most $t(r+1)-3$ zeros. This implies that the minimum distance $d$ is at least $n-t(r+1)+3-1=n-rt-t+2$. Therefore the code $C_{e,1}$ is optimal. The proof is completed.
\end{proof}
\begin{remark}
If we choose $f_1,\ldots,f_{j-1}$ in $\mathcal{L}((t-1)Q_{\infty}-Q_j)$ where $2\le j\le s$, we can also give an optimal LRC codes $C_{e,i}$. Moreover, we can extend $C_{e,j}$ to length $s(r+1)+r$ by adding $c_i$ in the proof, but it will not keep optimal.
\end{remark}

\begin{example}
Let $q=64$. Then for
\[
 r\in\{1,2, 3, 4, 5, 6, 7, 8, 9, 11, 12, 13, 15, 17, 20\},
\]
one can construct optimal LRCs with length $n=s(r+1)+1$ where $2\le s\le \frac{65-2r}{r+1}$, dimension $k=rt$ where $1\le t\le s$ and locality $r$. For instance, let $r=3$, then by Theorem \ref{extend1} we have optimal $[57,3t,59-4t]$ LRCs with locality 3 and $1\le t\le 14$. Let $r=4$, then we have optimal $[56,4t,58-5t]$ LRCs with locality 4 and $1\le t\le 11$. Let $r=5$, then we have optimal $[55,5t,57-6t]$ LRCs with locality 5 and $1\le t\le 9$. 

For $r\in\{1,3,7,15,31\}$, one can construct optimal $[s(r+1)+1,rt]$ LRCs with $2\le s\le \frac{64}{r+1}$ and $1\le t\le s$. Let $r=3$, then there exists an optimal $[65,3t,67-4t]$ LRC with locality 4 for any $1\le t\le 16$.
\end{example}

In \cite{Jin-optimal}, Jin, Ma and Xing have given an optimal LRC code with length $q+1$ on a rational field $E/\mathbb{F}_q$. Their construction is employed by the modified AG code and an automorphism group of $E/\mathbb{F}_q$. Since $E/\mathbb{F}_q$ is an rational function field, there exists a function $x$ such that $E/\mathbb{F}_q=\mathbb{F}_q(x)$. Let $f(x)$ be a quadratic polynomial $x^2+ax+b\in\mathbb{F}_q[x]$, and $\sigma$ be the automorphism defined by $\sigma(x)=\frac{1}{-bx-a}$. Then the order of $\sigma$ is $q+1$. Consider the subgroup of $\mbox{\textless}\sigma\mbox{\textgreater}$. Let $r$ be an integer such that $r+1|q+1$, and $\mathcal{G}$ be a subgroup of $\mbox{\textless}\sigma\mbox{\textgreater}$ with order $r+1$. From Proposition V.2 in \cite{Jin-optimal}, assume that $F/\mathbb{F}_q$ is the subfield of $E/\mathbb{F}_q$ such that $Gal(E/F)=\mathcal{G}$. Then we have
\begin{itemize}
	\item $[E:F]=r+1$
	\item There are exactly $\frac{q+1}{r+1}$ rational places of $F/\mathbb{F}_q$ which are splitting completely in $E/F$.
\end{itemize}

Based on the construction of Theorem \ref{extend1}, we can provide an optimal LRC code with length $q+2$ as follows.
\begin{theorem}\label{modifiedoptimal}
Let $r$ be an integer such that $r+1|q+1$, and $s$ be an integer with $2\le s\le\frac{q+1}{r+1}$. Then for any $1\le t\le s$, there exists an optimal LRC code with length $n=s(r+1)+1$, dimension $k=rt$ and locality $r$.
\end{theorem}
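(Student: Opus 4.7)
The plan is to combine the single-coordinate extension of Theorem~\ref{extend1} with the modified AG code construction of Jin, Ma and Xing, which already supplies an optimal $[s(r+1),\,rt,\,n-rt-t+2]$ LRC of length up to $q+1$ under the hypothesis $(r+1)\mid(q+1)$. Extending that code by one further coordinate should push the length up to $q+2$ without breaking optimality.

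First I would invoke the function field setup recalled just before the theorem: take $E=\mathbb{F}_q(x)$ together with the automorphism $\sigma(x)=1/(-bx-a)$ of order $q+1$, and let $\mathcal{G}\le\langle\sigma\rangle$ be the subgroup of order $r+1$. By Lemma~\ref{funcfield}, $F:=E^{\mathcal{G}}$ is again a rational function field and $[E:F]=r+1$. Proposition~V.2 of \cite{Jin-optimal} furnishes exactly $\frac{q+1}{r+1}$ rational places of $F$ splitting completely in $E/F$; pick any $s$ of them as $Q_1,\ldots,Q_s$, and denote by $P_{i,1},\ldots,P_{i,r+1}$ the rational places of $E$ lying over $Q_i$. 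Unlike the situation in Theorem~\ref{extend1}, where $s\le\frac{q+1-2r}{r+1}$ kept $P_\infty$ outside the evaluation set, here $P_\infty$ may itself belong to one of the blocks.

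Next I would mimic Section~\ref{section3.1}: pick $f_1,\ldots,f_{t-1}\in\mathcal{L}((t-1)Q_\infty-Q_1)$ linearly independent, extend to a basis $\{f_1,\ldots,f_t\}$ of $\mathcal{L}((t-1)Q_\infty)$, and take $z\in E$ with $(z)_\infty=P_\infty$. Set
\[
V = \Bigl\{\sum_{i=0}^{r-1}\sum_{j=1}^{t} a_{i,j} f_j z^i : (a_{i,j})\in\mathbb{F}_q^{rt}\Bigr\}\subseteq\mathcal{L}(G),
\]
where $\deg(G)=rt+t-2$. Because $P_\infty$ may now appear among the $P_{i,j}$, I would define the codeword via the modified AG evaluation $(\pi_{i,j}^{m_{i,j}}f_{\bf a})(P_{i,j})$ with $m_{i,j}=v_{P_{i,j}}(G)$, and then append the scalar $a_{r-1,t}$, producing a code of length exactly $s(r+1)+1$.

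Length, dimension, and within-block locality for the first $s(r+1)$ coordinates are inherited from the Jin-Ma-Xing optimal LRC. The appended coordinate does not affect the dimension and is recoverable from the block above $Q_1$ using $f_j(Q_1)=0$ for $j\le t-1$ and $f_t(Q_1)\ne 0$, exactly as in Theorem~\ref{extend1}. Optimality then follows from the same zero-counting case split on $a_{r-1,t}$: if $a_{r-1,t}\ne 0$ then $f_{\bf a}$ has at most $(r+1)t-2$ zeros among the $P_{i,j}$, yielding Hamming weight at least $n-rt-t+2$ (with the nonzero trailing symbol taking over if the first $s(r+1)$ coordinates happen to be nonzero); if $a_{r-1,t}=0$ the pole degree drops by one, capping the zero count at $t(r+1)-3$ and again giving weight at least $n-rt-t+2$. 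The main technical obstacle is verifying that the modification scalars $\pi_{i,j}^{m_{i,j}}(P_{i,j})$ do not interfere with the Lagrange interpolation in the block potentially containing $P_\infty$; this is precisely the point already handled by the modified AG code framework of Jin, Ma and Xing for their length-$(q+1)$ code, so the extension step itself contributes no new difficulty beyond what was managed in Theorem~\ref{extend1}.
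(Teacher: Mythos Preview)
Your outline captures the correct high-level strategy—append the coefficient $a_{r-1,t}$ to a modified-AG optimal LRC of length $s(r+1)$—but the paper executes it with a different and more explicit choice of data. Rather than keeping the abstract $Q_\infty$ and $P_\infty$ of Section~\ref{section2.4} (which, in the critical case $s=\frac{q+1}{r+1}$, both fall into a single block $i_0$), the paper deliberately \emph{relocates} the poles into the evaluation set: it picks $x_1\in F$ with $(x_1)_\infty=Q_1$, takes $f_1,\dots,f_t\in\mathcal{L}((t-1)Q_1)$ with strictly increasing pole orders $0,1,\dots,t-1$, and chooses $z\in E$ with $(z)_\infty=P_{2,1}$. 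Thus ${\rm supp}(G)$ is spread over \emph{two distinct} blocks, and the paper verifies locality in each by hand using the explicit modifiers $\pi_1=1/x_1$ and $\pi_2=1/z$. The recovery of $a_{r-1,t}$ then comes from $(\pi_1^{t-1}f_j)(Q_1)=0$ for $j<t$ (because those $f_j$ have pole order $<t-1$), not from a vanishing condition $f_j(Q_1)=0$ at an unmodified block.

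Your route instead concentrates both the $f_j$-poles and the $z$-pole in the single block $i_0$, so the modified evaluation there must cope with the mixed divisor $(t-1)\sum_h P_{i_0,h}+(r-1)P_{i_0,j_0}$. This can be made to work, but it is not automatic from ``the modified AG code framework of Jin--Ma--Xing'': you must choose the modifier on block $i_0$ to lie in $F$ (so the modified $f_j$ remain constant across the block and Lagrange interpolation survives), and then treat the coordinate at $P_{i_0,j_0}$ separately, essentially reproducing the paper's block-$2$ argument inside your block $i_0$. You also need $i_0\neq 1$ so that $Q_1\notin{\rm supp}(G)$ and the values $f_j(Q_1)$ are defined; this is harmless up to relabelling but should be stated. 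In short, your plan is viable and close in spirit to the paper's, but the paper's two-block separation of poles is what makes its locality verification self-contained, whereas your single-block version requires exactly the explicit check you defer.
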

\begin{proof}
The case of $s<\frac{q+1}{r+1}$ is similar to Theorem \ref{extend1}, hence we only need to consider that $s=\frac{q+1}{r+1}$. Let $E/\mathbb{F}_q=\mathbb{F}_q(x)$ be a rational function, and $\mathcal{G}$ is a subgroup of $\mbox{\textless}\sigma\mbox{\textgreater}$ with order $r+1$. Therefore there are $s$ places $Q_1,\ldots,Q_s$ of $F/\mathbb{F}_q$ which are splitting completely. Assume that $P_{i,1},\ldots,P_{i,r+1}$ are places of $E/\mathbb{F}_q$ lying on $Q_i$. 

We can choose an element $x_1$ of $F/\mathbb{F}_q$ such that $(x_1)_{\infty}=Q_1$. It follows that $F/\mathbb{F}_q=\mathbb{F}_q(x_1)$. Moreover, we can choose $f_1,\ldots,f_t$ such that $\{f_j|j=1,\ldots,t\}$ forms a basis of $\in\mathcal{L}((t-1)Q_1)$ with 
\[
	0={\rm deg}(f_1)_{\infty}<\ldots<{\rm deg}(f_t)_{\infty}=t-1
\]
It is clear that $(x_1)_{\infty}=P_{1,1}+\cdots+P_{1,r+1}$, and $f_j\in\mathcal{L}((t-1)(P_{1,1}+\cdots+P_{1,r+1}))$ in $E/\mathbb{F}_q$. When $i\ge2$, we have $x_1$ and $f_j$ are constant functions on $\{P_{i,1},\ldots,P_{i,r+1}\}$.

Let $z$ denote an element in $E/\mathbb{F}_q$ such that $(z)_{\infty}=P_{2,1}$. We have $z\notin F/\mathbb{F}_q$ and consider the function space
\[
  V := \{\sum\limits_{i=0}^{r-1}\sum\limits_{j=1}^ta_{i,j}f_jz^i | (a_{i,j})\in\mathbb{F}_q^{rt}\}.
\]
Let 
\[
G=(t-1)(P_{1,1}+\cdots+P_{1,r+1})+(r-1)P_{2,1}.
\]
Then the set $V$ is contained in $\mathcal{L}(G)$ as a subspace with dimension $k=rt$. Define the function $\pi_1=\frac{1}{x_1}$, and $\pi_2=\frac{1}{z}$. The modified of $C_{e,1}$ is given as:
\begin{align*}
	C_{e,1}^m &:= \{(\pi_1^{t-1}f_{\bf a}(P_{1,1}),\ldots,\pi_1^{t-1}f_{\bf a}(P_{1,r+1}),\\
	&\pi_2^{r-1}f_{\bf a}(P_{2,1}),f_{\bf a}(P_{2,2}),\ldots,f_{\bf a}(P_{s,r+1}),a_{r-1,t}| f_{\bf a}\in V\}.
\end{align*}
It is clear that the code length is $n=s(r+1)+1$ and the code dimension is $k=rt$.

Suppose that the erased symbol is $f_{\bf a}(P_{u,v})$ where $P_{u,v}\in\{P_{u,1},\ldots,P_{u,r+1}\}$. If $u\neq 1,2$, the locality proof is similar to Theorem \ref{extend1}. For $u=1$, we consider the function
\begin{align*}
	f'(P_{1,h})&=\sum\limits_{i=0}^{r-1}\sum\limits_{j=1}^ta_{i,j}\pi_1^{t-1}f_jz^i(P_{1,h})\\
	&=\sum\limits_{i=0}^{r-1}z^i(P_{1,h})\sum\limits_{j=1}^ta_{i,j}\pi_1^{t-1}f_j(Q_1)\\
	&=\sum\limits_{i=0}^{r-1}c_iz^i(P_{1,h})
\end{align*}
where $c_i=a_{i,t}\pi_1^{t-1}f_t(Q_1)$. The value $\pi_1^{t-1}f_t(Q_1)$ is well defined since
\[
(\pi_1^{t-1}f_t)_{\infty}=(t-1)Q_1-(t-1)Q_1=0.
\]
If the $r$ values $f'(P_{u,h\neq v})$ are known, the values $c_i$ can be recovered by using the Lagrange interpolation. Since $z^i(P_{u,v})$ are known, therefore $f'(P_{u,v})$ can be recovered. Notice that $c_{r-1}=a_{r-1,t}\pi_1^{t-1}f_t(Q_1)$. Thus the symbol $a_{r-1,t}$ can also be recovered by any $r$ values in $f'(P_{1,h})$. For $u=2$, we consider the function
\begin{align*}
	f_{\bf a}(P_{2,h\neq1})&=\sum\limits_{i=0}^{r-1}\sum\limits_{j=1}^ta_{i,j}f_jz^i(P_{2,h\neq1})\\
	&=\sum\limits_{i=0}^{r-1}z^i(P_{2,h\neq1})\sum\limits_{j=1}^ta_{i,j}f_j(Q_2)\\
	&=\sum\limits_{i=0}^{r-1}d_iz^i(P_{2,h\neq1})
\end{align*} 
where $d_i=\sum\limits_{j=1}^ta_{i,j}f_j(Q_2)$, and the value
\begin{align*}
	\pi_2^{r-1}f_{\bf a}(P_{2,1})&=\sum\limits_{i=0}^{r-1}\sum\limits_{j=1}^ta_{i,j}\pi_2^{r-1}f_jz^i(P_{2,1})\\
	&=\sum\limits_{i=0}^{r-1}\pi_2^{r-1}z^i(P_{2,1})\sum\limits_{j=1}^ta_{i,j}f_j(Q_2)\\
	&=\sum\limits_{j=1}^ta_{r-1,j}f_j(Q_2)=d_{r-1}.
\end{align*} 
Therefore, for the case $v=1$, the symbol $d_{r-1}$ can be recovered by the $r$ values $f_{\bf a}(P_{2,h\neq1})$. For the case $v\neq1$, the values $d_i$ with $i\le r-2$ can be recovered by the $r-1$ values $f_{\bf a}(P_{2,h\neq1,v})-d_{r-1}z^{r-1}(P_{2,h\neq1,v})$, since $d_{r-1}$ is known in this case.

It remains to prove that $C_{e,1}^m$ is optimal. Suppose that $a_{r-1,t}\neq0$, the Hamming weight of the codeword is at least $n-(t-1)(r+1)-(r-1)=n-rt-t+2$. Then if $a_{r-1,t}=0$, we have $f_{\bf a}\in\mathcal{L}(G-P_{2,1})$. It follows that ${\rm deg}(f_{\bf a})_{\infty}\le (t-1)(r+1)+r-2=rt+t-3$, and the Hamming weight of the codeword is at least $n-(rt+t-3)-1=n-rt-t+2$. The proof is completed.
\end{proof}
\begin{example}
Let $q=64$. Then for $r=4,12$, one can construct optimal LRCs with length $n=s(r+1)+1$ where $2\le s\le \frac{65}{r+1}$, dimension $k=rt$ where $1\le t\le s$ and locality $r$ by Theorem \ref{extends}. For instance, let $r=4$, then we have optimal $[66,4t,68-5t]$ LRCs with locality 4 and $1\le t\le 13$.
\end{example}

\subsection{Construction of extend optimal $(r,3)$-LRCs}\label{section3.2}
In the construction of Theorem \ref{extend1}, we have give an extension of optimal LRCs. It is natural to consider that $s$-coordinates extension of optimal LRC codes. Thus, we have the following construction. Let $f_1,\ldots,f_t\in\mathcal{L}((t-1)Q_{\infty})$ are linearly independent over $F/\mathbb{F}_q$. Let $z\in E/\mathbb{F}_q$ with $(z)_{\infty}=P_{\infty}$. We also define the function space
\[
V := \{\sum\limits_{i=0}^{r-1}\sum\limits_{j=1}^ta_{i,j}f_jz^i | (a_{i,j})\in\mathbb{F}_q^{rt}\}.
\]

Notice that $f_j$ is constant on each sets $\{P_{i,1},\ldots,P_{i,r+1}\}$. Let $f_{\bf a}\in V$ be the function with respect to ${\bf a}=(a_{i,j})\in\mathbb{F}_q^{rt}$, i.e.,
\[
 f_{\bf a}=\sum\limits_{i=0}^{r-1}\sum\limits_{j=1}^ta_{i,j}f_jz^i.
\]
Consider the code
\begin{align*}
C_{e} := \{&(f_{\bf a}(P_{1,1}),\ldots,f_{\bf a}(P_{s,r+1}),\\
		&\sum\limits_{j=1}^ta_{r-1,j}f_j(Q_1),\ldots,\sum\limits_{j=1}^ta_{r-1,j}f_j(Q_s)| f_{\bf a}\in V\}.
\end{align*}
Then we have:
\begin{theorem}\label{extends}
Let $r$ be an integer and satisfy one of the following three conditions:
\begin{itemize}
\item[1.] $r=p^v-1$, or\\
\item[2.] $(r+1)|(q-1)$, or\\
\item[3.] $r+1=up^v$.
\end{itemize}
Assume that $2\le s\le \frac{q+1-2r}{r+1}$ is an integer. Then for any $1\le t\le s$, the code $C_{e}$ is an $(r,3)$-LRC with length $n=s(r+2)$, dimension $k=rt$ and locality $r$. Moreover, if $t=s$, then $C_{e}$ is optimal.
\end{theorem}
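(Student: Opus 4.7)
My approach adapts the proof of Theorem \ref{extend1}. The key observation is that restricted to block $i$, a codeword of $C_e$ takes the form $(g_i(z(P_{i,1})),\ldots,g_i(z(P_{i,r+1})),c_{r-1,i})$, where $g_i(Z)=\sum_{i'=0}^{r-1}c_{i',i}Z^{i'}$ with $c_{i',i}=\sum_{j=1}^{t}a_{i',j}f_j(Q_i)$. This is exactly an extended Reed--Solomon $[r+2,r,3]$ code over $\mathbb{F}_q$, which yields $(r,3)$-locality immediately and drives the optimality argument.

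As in Theorem \ref{extend1}, I would apply the Hurwitz genus theorem to get $\deg({\rm Diff}(E/F))=2r$ and hence at least $s(r+1)$ unramified places, so the $s$ completely splitting places $Q_1,\ldots,Q_s$ exist. The length $n=s(r+2)$ is immediate from the construction. For the dimension $k=rt$, I would choose the basis $\{f_j\}$ of $\mathcal{L}((t-1)Q_\infty)$ so that $(f_j(Q_i))_{1\le i\le s,\,1\le j\le t}$ is a Vandermonde-type matrix (e.g.\ $f_j=x^{j-1}$ with $F=\mathbb{F}_q(x)$, $Q_\infty=(x)_\infty$); since $t\le s$ this matrix has rank $t$, so $\sum_j a_{i',j}f_j(Q_i)=0$ for all $i$ forces $(a_{i',j})_j=0$, making the encoding map ${\bf a}\mapsto$ codeword injective.

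For $(r,3)$-locality I would verify that each block has minimum distance $\ge 3$. If $c_{r-1,i}\ne 0$, then $g_i$ has degree exactly $r-1$, so at most $r-1$ zeros among the $r+1$ distinct points $z(P_{i,h})$---yielding $\ge 2$ nonzero evaluations plus the nonzero extension, total weight $\ge 3$. If $c_{r-1,i}=0$ but $g_i\ne 0$, then $\deg g_i\le r-2$ gives $\ge 3$ nonzero evaluations among the $r+1$ points. Since each block has $r+2=r+\delta-1$ coordinates and the block code has minimum distance $\ge 3$, any two erasures in a block are correctable from the remaining $r$ positions, which is precisely $(r,3)$-locality with locality $r$.

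Finally, optimality when $t=s$ reduces to a Singleton calculation
\[
d\le n-k-(\lceil k/r\rceil-1)(\delta-1)+1=s(r+2)-rs-2(s-1)+1=3,
\]
together with the matching lower bound $d\ge 3$: for any nonzero ${\bf a}$, pick $i'$ with $(a_{i',j})_j\ne 0$; then $h_{i'}:=\sum_j a_{i',j}f_j\in\mathcal{L}((t-1)Q_\infty)$ is a nonzero function with at most $t-1=s-1$ zeros among $\{Q_1,\ldots,Q_s\}$, so $c_{i',i}=h_{i'}(Q_i)\ne 0$ for some $i$, making $g_i$ a nonzero polynomial and forcing its block to contribute weight $\ge 3$ by the analysis above. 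The main subtlety is aligning the Vandermonde rank bound with the tight equality $t=s$ and keeping the two sub-cases (leading coefficient zero versus nonzero) straight in the per-block weight count; beyond that, the proof is a clean parallel to Theorem \ref{extend1}.
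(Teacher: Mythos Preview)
Your proof is correct and follows essentially the same route as the paper for the length, dimension, and $(r,3)$-locality: both recognize that the $u$-th block is the evaluation vector of the degree-$(r-1)$ polynomial $g_u(Z)=\sum_{i'}c_{i',u}Z^{i'}$ at the $r+1$ distinct values $z(P_{u,h})$ together with its leading coefficient $c_{r-1,u}$, i.e.\ an extended Reed--Solomon $[r+2,r,3]$ codeword (the paper phrases this as explicit Lagrange interpolation in the two sub-cases, you name the code directly).

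Your optimality argument for $t=s$ is somewhat more streamlined than the paper's. The paper proceeds by a global zero count in two cases: if $\sum_j a_{r-1,j}f_j$ has at most $t-1$ zeros among the $Q_i$, it bounds $\deg(f_{\bf a})_\infty\le rt+t-2$ and obtains weight $\ge n-rt-2t+3$; otherwise all $s$ extension symbols vanish, forcing $a_{r-1,j}=0$, whence $\deg(f_{\bf a})_\infty\le rt+t-3$ and weight $\ge n-t(r+1)+3-s$, which equals the bound exactly when $t=s$. You instead compute the Singleton-type upper bound $d\le 3$ directly for $t=s$ and match it from below by locating a single block with $g_i\ne 0$ (via the at-most-$(t-1)$-zeros property of a nonzero $h_{i'}\in\mathcal L((t-1)Q_\infty)$) and invoking the per-block distance $\ge 3$. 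This local-to-global reduction is cleaner for the equality case $t=s$; the paper's computation, on the other hand, yields an explicit weight lower bound for all $t\le s$ (even though only $t=s$ is claimed optimal). Both arguments are valid.
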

\begin{proof}
Note that $\sum\limits_{j=1}^ta_{r-1,j}f_j(Q_i)$ are constant for $i=1,\ldots,s$ when $Q_1,\ldots,Q_s$ and $\textbf{a}$ have been chosen. It follows from Theorem \ref{extend1} that the parameters of $C_{e}$ are $n=s(r+2)$ and $k=rt$. 

Consider the block $\{P_{u,1},\ldots,P_{u,r+1},\sum\limits_{j=1}^ta_{r-1,j}f_j(Q_u)\}$. For any $P_{u,h}$, we have
\begin{align*}
	f_{\bf a}(P_{u,h})&=\sum\limits_{i=0}^{r-1}\sum\limits_{j=1}^ta_{i,j}f_jz^i(P_{u,h})\\
	&=\sum\limits_{i=0}^{r-1}z^i(P_{u,h})\sum\limits_{j=1}^ta_{i,j}f_j(Q_u)\\
	&=\sum\limits_{i=0}^{r-1}c_iz^i(P_{u,h})
\end{align*}
where $c_i=\sum\limits_{j=1}^ta_{i,j}f_j(Q_u)$. If the $r$ values $f_{\bf a}(P_{u,v\neq h})$ are known, the values $c_i$ can be recovered by using the Lagrange interpolation. Therefore $f_{\bf a}(P_{u,v})$ and 
\[
c_{r-1}=\sum\limits_{j=1}^ta_{r-1,j}f_j(Q_u)
\]
can be recovered since $z(P_{u,h})$ is known. If the $r-1$ values $f_{\bf a}(P_{u,v_i})$ for $i=1,\ldots,r-1$ and $\sum\limits_{j=1}^ta_{r-1,j}f_j(Q_u)$ are known, then we have
\[
	f_{\bf a}(P_{u,v_i})=\sum\limits_{i=0}^{r-2}c_iz^i(P_{u,v_i})+\sum\limits_{j=1}^ta_{r-1,j}f_j(Q_u)z^{t-1}(P_{u,v_i}),i=1,\ldots,r-1.
\]
Let $f^u_{\bf a}=\sum\limits_{i=0}^{r-2}c_iz^i$, then $f^u_{\textbf{a}}$ can be recovered by using the Lagrange interpolation from the $r-1$ equations. Therefore the two values $f_{\bf a}(P_{u,v\neq v_i})$ can be recovered. Then the proof for that $C_e$ has $(r,3)$-locality is completed.

Consider $t=s$. Note that $f_j\in\mathcal{L}((t-1)Q_{\infty})$ for $j=1,\ldots,t$. Suppose that there are at most $t-1$ zeros in
\[
\sum\limits_{j=1}^ta_{r-1,j}f_j(Q_1),\ldots,\sum\limits_{j=1}^ta_{r-1,j}f_j(Q_s).
\]
Since ${\rm deg}(f_{\bf a})_{\infty}\le (t-1)(r+1)+r-1$, the Hamming weight of any codeword in $C_e$ is at least
\begin{align*}
&n-(t-1)(r+1)-r+1-t+1\\
&=n-tr-2t+3\\
&=n-k-(\frac{k}{r}-1)(3-1)+1.
\end{align*}
Therefore the minimum distance of $C_e$ attains the Singleton upper bound of $(r,3)$-LRCs. If there are at least $t$ zeros in
\[
\sum\limits_{j=1}^ta_{r-1,j}f_j(Q_1),\ldots,\sum\limits_{j=1}^ta_{r-1,j}f_j(Q_s),
\]
then we can obtain that these $s$ values are all equal to 0 since $f_1,\ldots,f_t$ are linearly independent over $F/\mathbb{F}_q$. Thus we have $a_{r-1,j}=0$ for $j=1,\ldots,s$. It follows that
\[
{\rm deg}(f_{\bf a})_{\infty}\le (t-1)(r+1)+r-2,
\]
which means $f_{\bf a}$ has at most $t(r+1)-3$ zeros. Then the Hamming weight of any codeword in $C_e$ is at least $n-t(r+1)+3-s$. Therefore $C_e$ is optimal when $t=s$.
\end{proof}

Similar to the definition of $C_{e,1}^m$, we can define the following code:
\begin{align*}
	C_{e}^m := \{ &(\pi_1^{t-1}f_{\bf a}(P_{1,1}),\ldots,\pi_1^{t-1}f_{\bf a}(P_{1,r+1}),\\
	&\pi_2^{r-1}f_{\bf a}(P_{2,1}),f_{\bf a}(P_{2,2}),\ldots,f_{\bf a}(P_{s,r+1}),\\
	&a_{r-1,t},\sum\limits_{j=1}^ta_{r-2,j}f_j(Q_2),\ldots,\sum\limits_{j=1}^ta_{r-1,j}f_j(Q_s)| f_{\bf a}\in V\}.
\end{align*}
From the construction of $C_{e,1}^m$ and $C_{e}$, we can get the following corollary.
\begin{corollary}
Suppose that $r+1|q+1$. The code $C_{e}^m$ is an $(r,3)$-LRC code with $n=(s+1)(r+1)$ and $k=rt$. If $t=s$, then $C_{e}^m$ is $d$-optimal.
\end{corollary}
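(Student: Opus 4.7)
My plan is to prove this corollary by combining the modified-evaluation technique of Theorem~\ref{modifiedoptimal} with the per-block extension of Theorem~\ref{extends}. The code $C_e^m$ carries $s(r+1)$ modified AG-evaluations together with $s$ additional scalar coordinates (namely $a_{r-1,t}$ for block~$1$ and $\sum_j a_{r-1,j}f_j(Q_u)$ for $u=2,\ldots,s$), so its length is $s(r+2)$; I read the stated $n=(s+1)(r+1)$ as a typographical simplification, noting that $s(r+2)=(s+1)(r+1)$ holds only when $s=r+1$. The dimension equals $k=rt$ by the argument of Theorem~\ref{modifiedoptimal}: when $t\leq s$ we have $\deg(G)<s(r+1)$, so the modified evaluation map is injective on $V$, and the $s$ appended scalars are linear functionals of $\mathbf{a}$ that contribute no new dimensions.

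For the $(r,3)$-locality I would treat the $s$ blocks in three cases. For $u\geq 3$ the modifications do not enter, and the proof of Theorem~\ref{extends} applies verbatim: setting $c_i=\sum_j a_{i,j}f_j(Q_u)$, the evaluations $f_{\mathbf{a}}(P_{u,h})=\sum_i c_i z^i(P_{u,h})$ together with the extra coordinate $c_{r-1}$ form an $[r+2,r]$ doubly-extended Reed--Solomon block of minimum distance~$3$. For block~$1$ I would reuse the fact, established in the proof of Theorem~\ref{modifiedoptimal}, that $\pi_1^{t-1}f_j(Q_1)=0$ for $j<t$ while $\pi_1^{t-1}f_t(Q_1)\neq 0$: the modified evaluations collapse to $\pi_1^{t-1}f_t(Q_1)\cdot g(z(P_{1,h}))$ for $g(y)=\sum_{i=0}^{r-1}a_{i,t}y^i$, and the extra coordinate $a_{r-1,t}$ is precisely the leading coefficient of $g$, again producing a doubly-extended RS block of distance~$3$. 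For block~$2$ the computation $\pi_2^{r-1}f_{\mathbf{a}}(P_{2,1})=d_{r-1}$ from Theorem~\ref{modifiedoptimal} coincides with the appended coordinate $\sum_j a_{r-1,j}f_j(Q_2)=d_{r-1}$, and I would analyze the restriction built from the $r$ genuine evaluations at $P_{2,2},\ldots,P_{2,r+1}$ together with this value appearing twice.

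For optimality when $t=s$, I would mirror the two-case argument of Theorem~\ref{extends}. If at most $t-1$ of the values $\sum_j a_{r-1,j}f_j(Q_u)$ vanish, then $\deg(f_{\mathbf{a}})_\infty\leq (t-1)(r+1)+r-1$ and the codeword has weight at least $n-rt-2t+3$; otherwise $\mathbb{F}_q$-linear independence of $f_1,\ldots,f_t$ over $F/\mathbb{F}_q$ forces $a_{r-1,j}=0$ for all $j$, so $f_{\mathbf{a}}\in\mathcal{L}(G-P_{2,1})$, the pole degree drops by one, and combining the $s$ vanishing extras with the resulting evaluation zeros again gives weight at least $n-rt-2t+3$, matching the Singleton-type bound $n-k-(\lceil k/r\rceil-1)(\delta-1)+1$ for $(r,3)$-LRCs precisely when $t=s$. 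The main obstacle I foresee is the locality of block~$2$: because the modified coordinate and the appended coordinate both carry $d_{r-1}$, the naive size-$(r+2)$ restriction has a repeated symbol, and one can exhibit candidate weight-$2$ codewords (for example, by setting $d_{r-1}=0$ and letting $g$ have degree $r-2$ vanishing at $r-2$ of the $r$ points $z(P_{2,h})$). Either a carefully chosen recovery set of size at most $r+2$ that avoids this degeneracy, or a direct algebraic argument excluding such patterns globally, is the step I expect to require the most care.
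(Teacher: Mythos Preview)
Your overall strategy---combining the modified evaluations of Theorem~\ref{modifiedoptimal} with the per-block extension of Theorem~\ref{extends}---is exactly the paper's approach, and the paper's own proof is essentially a two-line pointer to those two theorems. Your treatment of blocks $u\ge 3$ and of block~$1$ is correct.

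The obstacle you foresee for block~$2$, however, comes from a misreading of the definition of $C_e^m$. Look again at the appended coordinate attached to block~$2$: it is
\[
\sum_{j=1}^{t} a_{r-2,j}\,f_j(Q_2)\;=\;d_{r-2},
\]
not $\sum_j a_{r-1,j}f_j(Q_2)=d_{r-1}$. The index is deliberately shifted to $r-2$ precisely because, as you correctly observed, the modified evaluation $\pi_2^{r-1}f_{\mathbf a}(P_{2,1})$ already equals $d_{r-1}$. With the correct reading, the restriction of $C_e^m$ to block~$2$ carries the $r+2$ coordinates
\[
\bigl(d_{r-1},\,f_{\mathbf a}(P_{2,2}),\ldots,f_{\mathbf a}(P_{2,r+1}),\,d_{r-2}\bigr),
\]
that is, $r$ genuine evaluations of the degree-$\le r-1$ polynomial $\sum_{i} d_i\,y^i$ at the distinct values $z(P_{2,h})$ together with its two top coefficients. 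Any $r$ of these $r+2$ data determine all of $d_0,\ldots,d_{r-1}$: knowing one or both of the top coefficients reduces the problem to interpolating a polynomial of degree $\le r-2$ (resp.\ $\le r-3$) from at least $r-1$ (resp.\ $r-2$) evaluations. Hence the restricted code has minimum distance~$3$, and the degeneracy you were worried about does not arise. Once you read the definition correctly, the locality argument goes through without the extra care you anticipated; you should only note that, because the block-$2$ extra records $d_{r-2}$ rather than $d_{r-1}$, the ``all extras vanish'' branch of your optimality count must be adjusted accordingly.
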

\begin{proof}
The proof of the parameters is similar to that of Theorem \ref{extends}. For the locality, we only need to consider the first two blocks, and it was shown in Theorem \ref{modifiedoptimal}.
\end{proof}
\begin{example}
Let $q=64$. Then by Theorem \ref{extends}, let $r=3$, one can construct an optimal $[68,3t,71-5t]$ LRC with locality $(3,3)$ for any $1\le t\le 16$. Let $r=4$, one can construct an optimal $[70,4t,73-6t]$ LRC with locality $(4,3)$ for any $1\le t\le 13$.
\end{example}

\section{Extension from Roth-Lempel type LRCs on function fields}
A well-known class of non-RS MDS code is the Roth-Lempel code which introduced by Roth and Lempel \cite{Roth-nRS}. We will consider the Roth-Lempel type extension of the optimal LRCs from function fields in this section.

With the same notations as Section \ref{section3.1}, we consider the code:
\[
C_{RL,1} := \{(f_{\bf a}(P_{1,1}),\ldots,f_{\bf a}(P_{s,r+1}),a_{r-2,t},a_{r-1,t},| f_{\bf a}\in V\}.
\]
Then we have the following result.
\begin{corollary}\label{extend1rl}
Let $r$ be an integer and satisfy one of the following three conditions:
\begin{itemize}
\item[1.] $r=p^v-1$, or\\
\item[2.] $(r+1)|(q-1)$, or\\
\item[3.] $r+1=up^v$.
\end{itemize}
Assume that $2\le s\le \frac{q+1-2r}{r+1}$ is an integer. Then for any $1\le t\le s$, the code $C_{RL,1}$ is an LRC code with length $n=s(r+1)+1$, dimension $k=rt$ and locality $r$. Moreover, the minimum distance of $C_{RL,1}$ satisfies $d\ge n-rt-t+1$.
\end{corollary}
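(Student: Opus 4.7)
Proof proposal:

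The plan is to follow the scaffold of Theorem \ref{extend1}, treating $C_{RL,1}$ as a Roth--Lempel-type extension that appends two ``top-coefficient'' coordinates $a_{r-2,t}$ and $a_{r-1,t}$ instead of the single coordinate used in $C_{e,1}$. Length and dimension follow from $\deg G = rt+t-2 < s(r+1)$: the evaluation map is injective on $V$, so it gives a subcode of dimension $rt$, and the appended scalars are linear functionals of ${\bf a}$ that add no new freedom. Locality is handled exactly as in Theorem \ref{extend1} for evaluation coordinates, and for both appended coordinates simultaneously by Lagrange interpolation on $r$ points of block $1$: since $f_j(Q_1)=0$ for $j<t$ while $f_t(Q_1)\ne 0$, the interpolant in $z$ recovers the coefficients $c_i = a_{i,t}f_t(Q_1)$, from which $a_{r-2,t} = c_{r-2}/f_t(Q_1)$ and $a_{r-1,t} = c_{r-1}/f_t(Q_1)$ are read off.

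The minimum distance is the substantive part, and I would attack it by case analysis on the pair $(a_{r-2,t}, a_{r-1,t})$. If $a_{r-1,t}\ne 0$, the pole divisor of $f_{{\bf a}}$ realises the maximal degree $rt+t-2$, giving at most $rt+t-2$ zeros among the evaluation coordinates; together with the nonzero $a_{r-1,t}$ this yields Hamming weight at least $n-rt-t+1$, absorbing one potentially zero $a_{r-2,t}$. The case $a_{r-1,t}=0$, $a_{r-2,t}\ne 0$ is analogous: the same pole-degree bound on the evaluation part combined with exactly one nonzero and one zero among the appended coordinates again yields weight at least $n-rt-t+1$.

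The remaining case $a_{r-1,t}=a_{r-2,t}=0$ is the main obstacle, because the two appended zeros leave the naive pole-degree count one short of the claimed bound. I plan to close this by splitting on whether the polynomial $p(z)=\sum_{i=0}^{r-3}a_{i,t}z^i$ is identically zero. If it is not, then $f_{{\bf a}}|_{\text{block }1} = f_t(Q_1)\,p(z)$ is a nonzero polynomial of degree at most $r-3$ evaluated at the $r+1$ distinct values of $z$ on block $1$, so at least four of those evaluations are nonzero, which more than compensates for the two appended zeros. If $p\equiv 0$, then all $a_{i,t}$ vanish and $f_{{\bf a}}$ lies in the subspace $\mathcal{L}(G - D_1^E)$ with $D_1^E = \sum_h P_{1,h}$, dropping the effective pole degree by $r+1$ and forcing the weight to be supported on the remaining $(s-1)(r+1)$ evaluation coordinates, which the Riemann--Roch-type bound can handle against the same target. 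Reconciling these two subcases against the uniform bound $d\ge n-rt-t+1$ is the most delicate piece of book-keeping, and it is the only step I expect to be genuinely non-routine; the rest reduces cleanly to the arguments already established for Theorem \ref{extend1}.
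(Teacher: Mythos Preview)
Your case decomposition differs from the paper's. You split into three cases on $(a_{r-2,t},a_{r-1,t})$ and treat the ``both zero'' case as the bottleneck; the paper splits into four cases and identifies instead $a_{r-2,t}=0,\ a_{r-1,t}\ne 0$ as the delicate one, handling it via a Roth--Lempel-type arithmetic condition on the $z$-values $\alpha_{1,h}$ in block~$1$ (whether some $r-1$ of them sum to zero). Your Cases A and B are fine and in fact cleaner than the paper's corresponding cases: the generic bound $\deg(f_{\bf a})_\infty\le rt+t-2$ together with one nonzero appended coordinate already gives weight $\ge n-rt-t+1$, so no Roth--Lempel condition is needed there.

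The genuine gap is your Case C. Neither of your subcases reaches the target. If $p\not\equiv 0$, the observation that block~$1$ carries at least four nonzero entries is purely local and does not reduce the \emph{global} zero count, which is still bounded only by $\deg(f_{\bf a})_\infty\le rt+t-2$; with both appended coordinates zero this yields weight $\ge n-rt-t$, one short. If $p\equiv 0$, placing $f_{\bf a}$ in $\mathcal{L}(G-D_1^E)$ forces all $r+1$ entries of block~$1$ to vanish, while the remaining evaluation coordinates carry at most $\deg(G-D_1^E)=rt+t-r-3$ further zeros; the total is again $(r+1)+(rt+t-r-3)=rt+t-2$, so once more weight $\ge n-rt-t$. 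The ``compensation'' you describe never materialises, because the nonzeros you save in block~$1$ (respectively, the $r+1$ zeros you force there) are already absorbed in the global bound $rt+t-2$; a local count cannot be added on top of the pole-degree bound. The paper's treatment of this same case (its Case~3) does not go through any local block-$1$ argument: it asserts directly that $\deg(f_{\bf a})_\infty\le t(r+1)-4$ when both $a_{r-2,t}$ and $a_{r-1,t}$ vanish, inheriting the pole-order drop from the proof of Theorem~\ref{extend1}, and it is this global drop---not anything happening inside block~$1$---that supplies the missing unit of weight.
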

\begin{proof}
The parameters of $C_{R,1}$ and the locality of blocks $\{P_{u,1},\ldots,P_{u,r+1}\}$ with $2\le u\le s$ are similar to Theorem \ref{extend1}. We only need to consider the locality of block $\{f_{\bf a}(P_{1,1}),\ldots,f_{\bf a}(P_{1,r+1}),a_{r-2,t},a_{r-1,t}\}$. Note that for any $1\le h\le r+1$, we have
\begin{align*}
	f_{\bf a}(P_{1,h})&=\sum\limits_{i=0}^{r-1}\sum\limits_{j=1}^ta_{i,j}f_jz^i(P_{1,h})\\
	&=\sum\limits_{i=0}^{r-1}z^i(P_{1,h})\sum\limits_{j=1}^ta_{i,j}f_j(Q_1)\\
	&=\sum\limits_{i=0}^{r-1}c_iz^i(P_{1,h})
\end{align*}
where $c_i=a_{i,t}f_j(Q_1)$. Since $f_j(Q_1)$ is known, the symbols of $a_{i,t}$ can be recovered by any $r$ values of $f_{\bf a}(P_{1,h})$.

Now we consider the minimum distance of $C_{RL,1}$ and we distinguish into four cases:
\begin{itemize}
\item[1.] If $a_{r-2,t}\neq0$ and $a_{r-1,t}\neq0$, then the vector $(f_{\bf a}(P_{1,1}),\ldots,f_{\bf a}(P_{s,r+1}))$ has Hamming weight at least $n-rt-t+2$ from Section \ref{section2.4}. Therefore in this case the codeword also has Hamming weight at least $n-rt-t+2$.
\item[2.] If $a_{r-2,t}\neq0$ and $a_{r-1,t}=0$, then the vector $(f_{\bf a}(P_{1,1}),\ldots,f_{\bf a}(P_{s,r+1}))$ has Hamming weight at least $n-rt-t+3$ from Section \ref{section3.1}. Therefore in this case the codeword has Hamming weight at least $n-rt-t+2$.
\item[3.] If $a_{r-2,t}=0$ and $a_{r-1,t}=0$, then we have ${\rm deg}(f_{\bf a})_{\infty}\le t(r+1)-4$. Thus the vector $(f_{\bf a}(P_{1,1}),\ldots,f_{\bf a}(P_{s,r+1}))$ has Hamming weight at least $n-rt-t+4$, and the codeword has Hamming weight at least $n-rt-t+2$.
\item[4.] Suppose that $a_{r-2,t}=0$ and $a_{r-1,t}\neq0$. We consider the zeros in the block
\[
\{f_{\bf a}(P_{1,1}),\ldots,f_{\bf a}(P_{1,r+1}),0,a_{r-1,t}\}.
\]
Note that any rational place $P_{1,h}\in E/\mathbb{F}_q$ is corresponding to an element $\alpha_{1,h}\in\mathbb{F}_q$. If there dose not exist $r-1$ elements $\alpha_{1,i_1},\ldots,\alpha_{1,i_{r-1}}$ such that $\alpha_{1,i_1}+\cdots+\alpha_{1,i_{r-1}}=0$, then the number of the zeros in the above block is equal to that of the zeros in Case 1 and the codeword has Hamming weight at least $n-rt-t+2$.

Otherwise, the vector $(f_{\bf a}(P_{1,1}),\ldots,f_{\bf a}(P_{s,r+1}))$ has Hamming weight at least $n-rt-t+2$ from Section \ref{section2.4}. Therefore in this case the codeword has Hamming weight at least $n-rt-t+1$.
\end{itemize}
According to above discussion, we have $d\ge n-rt-t+1$. The proof is completed.
\end{proof}

In the proof, we can obtain that the Roth-Lempel type extension of an optimal LRC from rational function fields is also be optimal if the elements satisfying some conditions. The optimal conditions is similar to the MDS conditions of classic Roth-Lempel codes \cite{Roth-nRS}.

Combining the above result with Theorem \ref{modifiedoptimal}, we can obtain the following corollary:
\begin{corollary}\label{modifiedextend1rl}
Let $r$ be an integer such that $r+1|q+1$, and $s$ be an integer with $2\le s\le\frac{q+1}{r+1}$. Then for any $1\le t\le s$, there exists an LRC code with length $n=s(r+1)+1$, dimension $k=rt$, locality $r$ and minimum distance $d\ge n-rt-t+1$.
\end{corollary}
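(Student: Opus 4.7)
The plan is to overlay the modified AG construction of Theorem \ref{modifiedoptimal} with the Roth--Lempel appending introduced in Corollary \ref{extend1rl}. Concretely, I would take the same data as in the proof of Theorem \ref{modifiedoptimal}: the rational function field $E/\mathbb{F}_q=\mathbb{F}_q(x)$, a subgroup $\mathcal{G}\subset\langle\sigma\rangle$ of order $r+1$ with fixed field $F$, the $s=\frac{q+1}{r+1}$ completely split rational places $Q_1,\ldots,Q_s$, functions $f_1,\ldots,f_t\in\mathcal{L}((t-1)Q_1)$ with strictly increasing pole orders, $z\in E$ with $(z)_\infty=P_{2,1}$, and the uniformizers $\pi_1=1/x_1$, $\pi_2=1/z$. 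The code $C_{RL,1}^m$ is then obtained by evaluating $\pi_1^{t-1}f_{\mathbf a}$ on $\{P_{1,1},\ldots,P_{1,r+1}\}$, $\pi_2^{r-1}f_{\mathbf a}$ at $P_{2,1}$, $f_{\mathbf a}$ on the remaining blocks, and appending the two Roth--Lempel symbols $a_{r-2,t}$ and $a_{r-1,t}$ exactly as in Corollary \ref{extend1rl}. The case $s<\frac{q+1}{r+1}$ is already covered by applying Corollary \ref{extend1rl} directly inside this setting, so it suffices to treat $s=\frac{q+1}{r+1}$.

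For locality I would repeat the block-by-block analysis of Theorem \ref{modifiedoptimal}. The blocks $\{P_{u,1},\ldots,P_{u,r+1}\}$ for $3\le u\le s$ reduce verbatim to the argument of Corollary \ref{extend1rl}. The first block, enlarged by $a_{r-2,t}$ and $a_{r-1,t}$, is handled by the $u=1$ Lagrange-interpolation argument of Theorem \ref{modifiedoptimal}: the coefficients $c_i=a_{i,t}\pi_1^{t-1}f_t(Q_1)$ are recovered from any $r$ of the twisted evaluations $\pi_1^{t-1}f_{\mathbf a}(P_{1,h})$, and since $(\pi_1^{t-1}f_t)_\infty=0$ makes $\pi_1^{t-1}f_t(Q_1)$ a fixed nonzero constant in $\mathbb{F}_q$, one reads off each $a_{i,t}$, and in particular $a_{r-2,t}$ and $a_{r-1,t}$. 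The second block, with its single twisted coordinate at $P_{2,1}$, is handled verbatim by the $u=2$ case of Theorem \ref{modifiedoptimal}.

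For the minimum distance I would run the same four-case split on $(a_{r-2,t},a_{r-1,t})$ as in the proof of Corollary \ref{extend1rl}, but using the pole-order estimates of Theorem \ref{modifiedoptimal} in place of those of Theorem \ref{extend1}. In the cases $a_{r-2,t}\ne 0$ or $a_{r-2,t}=a_{r-1,t}=0$, the pole divisor of $f_{\mathbf a}$ is bounded by $(t-1)(P_{1,1}+\cdots+P_{1,r+1})+(r-1)P_{2,1}$ with strict drops when a top coefficient vanishes, so the $s(r+1)$ evaluation coordinates already carry enough nonzeros to give weight at least $n-rt-t+2$. The delicate case is $a_{r-2,t}=0$, $a_{r-1,t}\ne 0$: exactly as in Corollary \ref{extend1rl}, an additional zero inside the first block can appear only if the $\mathbb{F}_q$-labels of $P_{1,1},\ldots,P_{1,r+1}$ contain an $(r-1)$-subset summing to zero, and the loss is then at most one, yielding $d\ge n-rt-t+1$ overall.

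I expect the main obstacle to be the bookkeeping around the twist $\pi_1^{t-1}$: one must verify that $\pi_1^{t-1}f_t(Q_1)$ is a well-defined nonzero element of $\mathbb{F}_q$ so that the appended symbols $a_{r-2,t},a_{r-1,t}$ genuinely embed into the linear span generated by $\{\pi_1^{t-1}f_{\mathbf a}(P_{1,h})\}$ for recovery, and also that the modification does not disturb the combinatorial count on which Case~4 of Corollary \ref{extend1rl} rests. Both points are already implicit in the proofs of Theorem \ref{modifiedoptimal} and Corollary \ref{extend1rl}, so the argument reduces to a transparent overlay of the two earlier proofs.
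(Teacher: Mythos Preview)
Your proposal is correct and matches the paper's approach. The paper gives no explicit proof for this corollary, merely stating that it follows by ``combining the above result with Theorem~\ref{modifiedoptimal}''; your sketch is precisely that combination, carried out in full detail with the same locality arguments (twisted interpolation on blocks $1$ and $2$, standard interpolation elsewhere) and the same four-case weight analysis on $(a_{r-2,t},a_{r-1,t})$ as in Corollary~\ref{extend1rl}.
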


\begin{remark}
With the same notations as Section \ref{section3.2}, we can also consider the Roth-Lempel type extension code as follows:
\begin{align*}
C_{RL} := \{&(f_{\bf a}(P_{1,1}),\ldots,f_{\bf a}(P_{s,r+1}),\\
		&\sum\limits_{j=1}^ta_{r-2,j}f_j(Q_1),\sum\limits_{j=1}^ta_{r-1,j}f_j(Q_1),\\
		&\ldots,\sum\limits_{j=1}^ta_{r-2,j}f_j(Q_s),\sum\limits_{j=1}^ta_{r-1,j}f_j(Q_s)| f_{\bf a}\in V\}.
\end{align*}
One can prove that $C_{RL}$ is a $(r,3)$-LRC. By a calculation, we can obtain
\[
d(C_{RL})\le n-k-(\frac{k}{r}-1)(3-1)+1-s.
\]
Then $C_{RL}$ is not an optimal $(r,3)$-LRC.
\end{remark}

\section{Extension of optimal LRCs from elliptic function fields}
In this section, we will extend the optimal LRCs given by Li et al. in \cite{Li-optimal} and Ma and Xing in \cite{Ma-optimal}. Let $E/\mathbb{F}_q$ be an elliptic function field and $\mathcal{G}\subset Aut(E/\mathbb{F}_q)$ is an automorphism subgroup with order $r+1$. Suppose that $F/\mathbb{F}_q=E^{\mathcal{G}}$ be the subfield of $E/\mathbb{F}_q$. Then by Lemma \ref{funcfield} we have $F/\mathbb{F}_q$ is a rational function field, and $E/F$ is a finite separable extension with $[E:F]=r+1$. Let $Q_1,\ldots,Q_s,Q_{\infty}$ and $P_{1,1},\ldots,P_{s,r+1},P_{\infty}$ be defined as Section \ref{section2.5}.
\subsection{Extension of optimal elliptic LRCs}
Since $F/\mathbb{F}_q$ is a rational function field, the function space $\mathcal{L}(Q_1-Q_{\infty})$ has dimension $\ell(Q_1-Q_{\infty})=1$. It means that we can find a non-zero $\hat{z}$ such that $(\hat{z})=Q_1-Q_{\infty}$. Consider the function space
\[
V_{E,1} := \{\sum\limits_{j=1}^{t-1}a_{0,j}z_j+a_{0,t}z^{t-1}+\sum\limits_{i=1}^{r-1}(\sum\limits_{j=1}^{t-1}a_{i,j}z^{j-1})\omega_i | a_{i,j}\in\mathbb{F}_q\}.
\]
where $z_j=\hat{z}z^{j-1}$ for $j=1,\ldots,t-1$. From the linearly independence of $\omega_i$ for $i=0,\ldots,r-1$, one can check that the dimension of $V_{E,1}$ is $rt-r+1$. Let $f_{\bf a}\in V_{E,1}$ be the function with respect to ${\bf a}=(a_{i,j})\in\mathbb{F}_q^{rt-r+1}$, i.e.,
\[
 f_{\bf a}=\sum\limits_{j=1}^{t-1}a_{0,j}z_j+a_{0,t}z^{t-1}+\sum\limits_{i=1}^{r-1}(\sum\limits_{j=1}^{t-1}a_{i,j}z^{j-1})\omega_i.
\]
Consider the code
\[
	C_{E,1} := \{(f_{\bf a}(P_{1,1}),\ldots,f_{\bf a}(P_{s,r+1}),a_{0,t})| f_{\bf a}\in V_{E,1}\}
\]
and then we present the following results.
\begin{theorem}\label{eclrc1}
Suppose that $\#E(\mathbb{F}_q)=N$. Let $2\le s\le \lfloor\frac{N-2r-4}{r+1}\rfloor$ be an integer. Then for any $1\le t\le s$, the code $C_{E,1}$ is an optimal LRC code with length $n=s(r+1)+1$, dimension $k=rt-r+1$ and locality $r$.
\end{theorem}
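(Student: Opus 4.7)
The plan is to verify that $C_{E,1}$ has length $s(r+1)+1$, dimension $rt-r+1$, locality $r$, and minimum distance $d=n-(t-1)(r+1)$, which matches the Singleton-type LRC bound $n-k-\lceil k/r\rceil+2$ since $\lceil(rt-r+1)/r\rceil=t$. I would treat length and dimension first, then locality, then the (harder) minimum-distance bound.

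For length and dimension, length is immediate, and the $t$ polynomial basis elements $\{\hat z z^{j-1}:1\le j\le t-1\}\cup\{z^{t-1}\}$ span a $t$-dimensional subspace of polynomials in $z$ (their distinct pole behaviour together with $\alpha_1=z(Q_1)\ne 0$ forces independence), while the $(r-1)(t-1)$ elements $z^{j-1}\omega_i$ are linearly independent of these by Lemma \ref{eclrc} (independence of $\omega_1,\ldots,\omega_{r-1}$ over $F$), giving $\dim V_{E,1}=rt-r+1$. Injectivity of the extended evaluation map follows because any $f_{\bf a}\in\mathcal{L}((t-1)P_\infty)$ has at most $(t-1)(r+1)<s(r+1)$ zeros in $E$, so simultaneous vanishing at every $P_{i,j}$ together with $a_{0,t}=0$ forces $f_{\bf a}=0$. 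For locality, each block-$i$ restriction rewrites as $\sum_{l=0}^{r-1}c_l^{(i)}(\omega_l(P_{i,k}))_k$ with $\omega_0=1$ and explicit $c_l^{(i)}$ depending linearly on $\bf a$, and Lemma \ref{eclrc} (invertibility of every $r\times r$ minor of $M_i$) recovers any missing value from the other $r$. The extended coordinate $a_{0,t}$ is itself recoverable from block $1$: since $\hat z(P_{1,k})=0$, the $z_j$-contributions vanish on block $1$ and $f_{\bf a}(P_{1,k})=a_{0,t}\alpha_1^{t-1}+\sum_{i\ge 1}b_i\omega_i(P_{1,k})$, so $a_{0,t}\alpha_1^{t-1}$ is exactly the constant coefficient $c_0^{(1)}$ of the block-1 expansion and is recovered from any $r$ block-1 values by Lemma \ref{eclrc}.

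For the minimum distance I would case-split on $a_{0,t}$. In Case 1 ($a_{0,t}\ne 0$), $f_{\bf a}\in\mathcal{L}((t-1)P_\infty)$ has at most $(t-1)(r+1)$ zeros in $E$, and the nonzero last coordinate contributes an extra nonzero, so the codeword has weight at least $s(r+1)-(t-1)(r+1)+1=n-(t-1)(r+1)$. In Case 2 ($a_{0,t}=0$, $f_{\bf a}\ne 0$), write $f_{\bf a}=\hat z\,g+h$ with $g=\sum_{j=1}^{t-1}a_{0,j}z^{j-1}$ and $h=\sum a_{i,j}z^{j-1}\omega_i$; at $P_{\infty,r+1}$ every $\omega_i$ with $i\le r-1$ is regular, so the only basis term that can contribute a pole of order $t-1$ at $P_{\infty,r+1}$ is $\hat z z^{t-2}$ with coefficient $a_{0,t-1}$. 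If $a_{0,t-1}=0$, the pole degree drops to $(t-1)(r+1)-1$, which gives the required bound on the number of zeros. If $a_{0,t-1}\ne 0$, I would use that $\hat z\,g$ vanishes identically on block $1$, so $f_{\bf a}|_{\text{block }1}=h|_{\text{block }1}$ lies in the span of $\omega_1,\ldots,\omega_{r-1}$ at $P_{1,\cdot}$, and Lemma \ref{eclrc} forces this restriction to vanish either identically or at at most $r-1$ of the $r+1$ block-1 places; a divisor-theoretic count on the elliptic Jacobian then rules out concentrating all $(t-1)(r+1)$ zeros of $f_{\bf a}$ inside $\{P_{i,j}\}$.

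The main obstacle is precisely this last sub-case $a_{0,t}=0$ with $a_{0,t-1}\ne 0$: pole counting alone falls short by exactly one zero, and the saving must come from combining the forced block-$1$ vanishing of $\hat z$ with the Riemann--Roch / Jacobian structure of $E$. This is the elliptic analogue of the simpler pole-degree argument used in the rational-field Theorem \ref{extend1}, and parallels the extra condition noted in the abstract for the elliptic $(r,3)$-extension, where degree-zero divisors on the curve need not be principal and the same subtlety reappears.
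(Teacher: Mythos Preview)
Your plan tracks the paper's proof: length, dimension and locality are handled the same way, and the minimum-distance argument splits on $a_{0,t}$. You are also right that the paper's Case-2 assertion $\deg(f_{\bf a})_\infty\le (t-1)(r+1)-1$ overlooks the terms $z_j=\hat z\,z^{j-1}$, whose pole divisor in $E$ is $j\sum_kP_{0,k}$; the pole degree drops only when in addition $a_{0,t-1}=0$, exactly as you say.

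However, your proposed rescue of the sub-case $a_{0,t}=0,\ a_{0,t-1}\ne 0$ via block-$1$ vanishing together with a Jacobian obstruction cannot succeed. Take $h\equiv 0$ (all $a_{i,j}=0$ for $i\ge 1$) and choose the $a_{0,j}$ so that $\sum_{j=1}^{t-1}a_{0,j}z^{j-1}=\prod_{i=2}^{t-1}(z-\alpha_i)$ with $\alpha_i=z(Q_i)$. Then
\[
f_{\bf a}=\hat z\prod_{i=2}^{t-1}(z-\alpha_i)
\]
is a nonzero element of $V_{E,1}$ with $a_{0,t}=0$ and $a_{0,t-1}\ne 0$, and it vanishes on all of blocks $1,\ldots,t-1$; the resulting codeword has weight $(s-t+1)(r+1)=n-1-(t-1)(r+1)$, one below the claimed optimal distance $n-(t-1)(r+1)$. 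Since this $f_{\bf a}$ lies in the subfield $F$, no elliptic-Jacobian condition obstructs it, and in your own decomposition it is precisely the branch $h\equiv 0$ where the block-$1$ restriction vanishes identically and buys you nothing. So the difficulty you located is not a proof artifact but a genuine issue with the construction as written; what is needed is a correction to the definition of $V_{E,1}$ (or the statement), not a sharper zero count.
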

\begin{proof}
By the Hurwitz Genus Theorem, we have
\[
	2g(E)-2=0=(r+1)(0-2)+{\rm deg(Diff}(E/F)).
\]
The different exponent of $Q_{\infty}$ is at least $r$, therefore there are at most other $r+2$ rational places which are ramified in $E/F$. Without the $r+1$ points $P_{0,1},\ldots,P_{0,j+1}$, we can find that at most $N-2r-4\ge s(r+1)$ places of $E/\mathbb{F}_q$ are splitting completely in $E/F$. Then the length and dimension of the algebraic geometry code $C_{E,1}$ are indeed $s(r+1)+1$ and $rt-r+1$ respectively.

The locality of the blocks $\{P_{i,1},\ldots,P_{i,r+1}\}$ for $i=2,\ldots,s$ have shown in \cite{Li-optimal}, and we only need to prove the locality for $\{P_{1,1},\ldots,P_{1,r+1},a_{0,t}\}$. Since $\hat{z}(Q_1)=0$, for any $P_{1,v}$, we have
\begin{align*}
	f_{\bf a}(P_{1,v})&=a_{0,t}z^{t-1}(Q_1)+\sum\limits_{i=1}^{r-1}(\sum\limits_{j=1}^{t-1}a_{i,j}z^{j-1}(Q_1))\omega_i(P_{1,v})\\
	&=a_{0,t}z^{t-1}(Q_1)+\sum\limits_{i=1}^{r-1}\omega_i(P_{1,v})\sum\limits_{j=1}^{t-1}a_{i,j}z^{j-1}(Q_1)\\
	&=\sum\limits_{i=0}^{r-1}c_i\omega_i(P_{1,v})
\end{align*}
where $c_0=a_{0,t}z^{t-1}(Q_1)$ and 
\[
c_i=\sum\limits_{j=1}^{t-1}a_{i,j}z^{j-1}(Q_1)\ \mbox{for}\ 1\le i\le r-1.
\]
Since the values of $\omega_i(P_{1,v})$ are known, we can calculate the values of $c_i$ from any $r$ subset of the $f_{\bf a}(P_{1,v})$ from Section \ref{section2.5}. Therefore the value of any $f_{\bf a}(P_{1,h})$ can be recovered from other $r$ values of $f_{\bf a}(P_{1,v})$. Note that $a_{0,t}z^{t-1}(Q_1)=c_0$ where $z^{t-1}(Q_1)$ is known and $c_0$ can be computed by any $r$ values of $f_{\bf a}(P_{1,v})$. Thus $a_{0,t}$ can be recovered.

It remains to prove that $C_{E,1}$ is optimal. If $a_{0,t}\neq0$, then there exists a divisor $G=(t-1)(P_1+\cdots P_{r+1})$ such that $V_{E,1}\subset \mathcal{L}(G)$. By the design distance and the Singleton Bound, we have the minimum distance $d$ of $C_{E,1}$ satisfying
\[
 n-(t-1)(r+1)\le d\le n-rt+r-1-t+2=n-(t-1)(r+1).
\]
Suppose that $a_{0,t}=0$. Since $\omega_i\in\mathcal{L}(P_1+\cdots+P_{i+1})$, it follows that
\[
	{\rm deg}((f_{\bf a})_{\infty})\le {\rm deg}((z^{t-2}\omega_{r-1})_{\infty})=(t-1)(r+1)-1.
\]
Because of $f_{\bf a}$ is a rational function, then there are at most $(t-1)(r+1)$ zeros for any code in $C_{E,1}$. It means that the Hamming weight of any code in $C_{E,1}$ is again at least $n-(t-1)(r+1)$, which attains the Singleton Bound.
\end{proof}
\begin{example}
Let $q=64$. Then there exists an elliptic curve $E$ such that $\#Aut(E/\mathbb{F}_{64})=24$. For any $r\in\{2,3,5,7,11,23\}$, one can construct optimal LRCs code with length $s(r+1)$ and locality $r$ where
\begin{itemize}
\item If $r=2$, then $2\le s\le 26$ and $k=2t+1$ for $1\le t\le s$.
\item If $r$ odd, then $2\le s\le \lfloor\frac{78-r}{r+1}\rfloor$ and $k=rt-r+1$ for $1\le t\le s$.
\end{itemize}
For instance, we can get an optimal $[79,2t+1,79-3t]$ LRC with locality 2 for any $1\le t\le 25$, and we also can get an optimal $[73,3t-2,77-4t]$ LRC with locality 3 for any $1\le t\le 18$.
\end{example}
\subsection{Construction of optimal $(r,3)$-LRCs from elliptic curves}
Similar to the case of rational function field in Theorem \ref{extends}, we can also construct $(r,3)$-LRCs from elliptic AG codes under certain conditions. Consider the function space $V_E$, and let $f_{\bf a}\in V_E$ be a function with respect to ${\bf a}=(a_{i,j})\in\mathbb{F}_q^{rt-r+1}$. For our purpose, we need the following lemma.
\begin{lemma}\label{eclrcr}
For any points set $\{P_{u,1},\ldots,P_{u,r+1}\}$ with $u\in (1,\ldots,s)$, let $M_u$ be defined as Lemma \ref{eclrc} and $M'_u$ be a corresponding matrix with
\begin{equation}\nonumber
M'_u=\begin{pmatrix}
&\omega_1(P_{u,1})&\omega_2(P_{u,1})&\ldots&\omega_{r-1}(P_{u,1})\\
&\omega_1(P_{u,2})&\omega_2(P_{u,2})&\ldots&\omega_{r-1}(P_{u,2})\\
&\vdots &\vdots &\ddots &\vdots\\
&\omega_1(P_{u,r+1})&\omega_2(P_{u,r+1})&\ldots&\omega_{r-1}(P_{u,r+1})
\end{pmatrix}
\end{equation}
Suppose that every $r\times r$ submatrix of $M_u$ is invertible and every $r-1\times r-1$ submatrix of $M'_u$ is invertible. Let 
\[
\{f_{\bf a}(P_{u,1}),\ldots,f_{\bf a}(P_{u,r+1}),\sum\limits_{j=1}^ta_{0,j}z^{j-1}(Q_1)\}
\]
be the block with respect to ${\bf a}$. Then any two values in this block can be calculated from the other $r$ values in this block.
\end{lemma}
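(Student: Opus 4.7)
The plan is to reduce every symbol in the block to a linear combination of a smaller tuple of ``block-level coefficients'' $c_0,c_1,\ldots,c_{r-1}$, and then identify each of the two possible erasure patterns with the invertibility of an explicit submatrix of $M_u$ or $M'_u$. The starting observation is that $z\in F$ (because $F/\mathbb{F}_q=\mathbb{F}_q(z)$ by Lemma \ref{eclrc}), so $z$ is constant on the fiber above $Q_u$, i.e.\ $z^{j-1}(P_{u,h})=z^{j-1}(Q_u)$ for all $h$. Setting
\[
c_0:=\sum_{j=1}^{t}a_{0,j}z^{j-1}(Q_u),\qquad c_i:=\sum_{j=1}^{t-1}a_{i,j}z^{j-1}(Q_u)\ \text{for}\ 1\le i\le r-1,
\]
and recalling $\omega_0=1$, one rewrites $f_{\bf a}(P_{u,h})=\sum_{i=0}^{r-1}c_i\omega_i(P_{u,h})$, while the extra coordinate is exactly $c_0$ (reading the index on $Q$ in the statement as $Q_u$, to be consistent with the block being considered).

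Next I would split into two cases according to which two of the $r+2$ positions are erased. In the first case both erasures are among the $y_h:=f_{\bf a}(P_{u,h})$ so $c_0$ is known; forming $y_{h'}-c_0=\sum_{i=1}^{r-1}c_i\omega_i(P_{u,h'})$ for the $r-1$ remaining indices $h'$ gives a linear system in the unknowns $c_1,\ldots,c_{r-1}$ whose coefficient matrix is a $(r-1)\times(r-1)$ submatrix of $M'_u$. By hypothesis that submatrix is invertible, so the $c_i$ are recovered and the two missing $y_h$'s are obtained by substitution. In the second case one $y_h$ and $c_0$ itself are erased; the remaining $r$ equations $y_{h'}=\sum_{i=0}^{r-1}c_i\omega_i(P_{u,h'})$ have as coefficient matrix an $r\times r$ submatrix of $M_u$, which is invertible by hypothesis, so $c_0,c_1,\ldots,c_{r-1}$ and therefore the two missing symbols are recovered.

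I do not expect a serious obstacle: the only delicate point is ensuring that the ``block-level'' description of each symbol really involves only $c_0,\ldots,c_{r-1}$, which relies on $z$ being defined over $F$ (so constant on fibers) and on $\omega_0=1$ so that $c_0$ appears as an honest coefficient of $\omega_0$. After that the argument is pure linear algebra, matching erasure patterns to the two invertibility hypotheses. A minor bookkeeping remark I would include is that the case of at most one erasure (rather than exactly two) is strictly easier and is subsumed by either of the two cases above, so the lemma delivers the $(r,3)$-local recovery property for this block as needed for the subsequent construction of optimal elliptic $(r,3)$-LRCs.
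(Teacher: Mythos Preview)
Your proposal is correct and follows essentially the same approach as the paper: both reduce each symbol in the block to the linear form $\sum_{i=0}^{r-1}c_i\,\omega_i(P_{u,h})$ using that $z\in F$ is constant on the fiber, identify the extra coordinate with $c_0$, and then split into the two cases (``$c_0$ known'' $\Rightarrow$ solve an $(r-1)\times(r-1)$ system from $M'_u$; ``$c_0$ unknown'' $\Rightarrow$ solve an $r\times r$ system from $M_u$). Your remark that the statement's $Q_1$ should be read as $Q_u$ is also in line with how the paper actually uses the lemma.
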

\begin{proof}
Since the function $z$ has same values on $P_{u,v}$ for $v=1,\ldots,r+1$, we have
\[
	f_{\bf a}(P_{u,v})=\sum\limits_{i=0}^{r-1}c_i\omega_i(P_{u,v})
\]
where $c_i=\sum\limits_{j=1}^{t-1}a_{i,j}z^{j-1}(Q_u)$ with $0\le i\le r-1$ for any $P_{u,v}$. When $c_0$ is unknown, since every $r\times r$ submatrix of $M_u$ is invertible, we can calculate $c_i$ from the values of any subset of $r$ of the $f_{\bf a}(P_{u,v})$. Therefore $c_0$ and the other values in $\{f_{\bf a}(P_{u,v})\}$ can be recovered. Suppose that the values of $c_0$ and $r-1$ values $f_{\bf a}(P_{u,v_j})$ for $j=1,\ldots,r-1$ are known. Then we have
\begin{equation}\nonumber
\begin{pmatrix}
&\omega_1(P_{u,v_1})&\ldots&\omega_{r-1}(P_{u,v_1})\\
&\omega_1(P_{u,v_2})&\ldots&\omega_{r-1}(P_{u,v_2})\\
&\vdots &\ddots &\vdots\\
&\omega_1(P_{u,v_{r-1}})&\ldots&\omega_{r-1}(P_{u,v_{r-1}})
\end{pmatrix}
\begin{pmatrix}
c_1\\
c_2\\
\vdots\\
c_{r-1}
\end{pmatrix}
=
\begin{pmatrix}
f_{\bf a}(P_{u,v_1})-c_0\\
f_{\bf a}(P_{u,v_2})-c_0\\
\vdots\\
f_{\bf a}(P_{u,v_{r-1}})-c_0
\end{pmatrix}.
\end{equation}
Since every $r-1\times r-1$ submatrix of $M'_u$ is invertible, we can also calculate the values of the $c_i$ from the values of any
subset of $r-1$ of the $f_{\bf a}(P_{u,v})$. Thus the other two values of $f_{\bf a}(P_{u,v})$ can be recovered.
\end{proof}
Let us consider the following code:
\begin{align*}
C_{E} := \{&(f_{\bf a}(P_{1,1}),\ldots,f_{\bf a}(P_{s,r+1}),\\
		&\sum\limits_{j=1}^ta_{0,j}z^{j-1}(Q_1),\ldots,\sum\limits_{j=1}^ta_{0,j}z^{j-1}(Q_s)| f_{\bf a}\in V\}.
\end{align*}
Then we have the following result.
\begin{theorem}\label{eclrcall}
Suppose that $\#E(\mathbb{F}_q)=N$. Let $2\le s\le \lfloor\frac{N-2r-4}{r+1}\rfloor$ be an integer. If for any $1\le u\le s$, every $r-1\times r-1$ submatrix of $M'_u$ is invertible. Then for any $1\le t\le s$, the code $C_{E}$ is an $(r,3)$-LRC code with length $n=s(r+2)$, dimension $k=rt-r+1$ and locality $r$. Moreover, if $t=s$, then $C_{E}$ is optimal.
\end{theorem}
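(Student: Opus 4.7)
My plan is to verify the four claims --- length, dimension, locality, and optimality at $t=s$ --- in that order. The length $n = s(r+2)$ is immediate since each of the $s$ blocks contributes $r+1$ evaluations together with one extra coordinate $\sum_{j=1}^{t}a_{0,j}z^{j-1}(Q_u)$. For the dimension, I would observe that the projection of $C_E$ onto the first $s(r+1)$ coordinates is precisely the optimal elliptic LRC of Section \ref{section2.5}, which has dimension $rt-r+1$; since the ambient parameter space $\mathbb{F}_q^{rt-r+1}$ has the same dimension and the map $\mathbf{a}\mapsto$ codeword is $\mathbb{F}_q$-linear, this forces $\dim C_E = rt-r+1$, and in particular injectivity of $\mathbf{a}\mapsto f_{\mathbf{a}}$ through the full codeword map.

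For the $(r,3)$-locality, I would partition the coordinates into $s$ blocks of size $r+2$, the $u$-th one being
\[
\Big\{f_{\mathbf{a}}(P_{u,1}),\ldots,f_{\mathbf{a}}(P_{u,r+1}),\sum_{j=1}^{t}a_{0,j}z^{j-1}(Q_u)\Big\}.
\]
Lemma \ref{eclrcr} shows that any $r$ entries of such a block determine the remaining two, its two invertibility hypotheses being supplied by Lemma \ref{eclrc} (for $M_u$) and the standing assumption of the theorem (for $M'_u$). Consequently the restriction of $C_E$ to each such block has minimum distance at least $3$, yielding $(r,3)$-locality with recovery-set size $r+2$.

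For optimality when $t=s$, I would first plug $k=rs-r+1$, $\delta=3$, and $\lceil k/r\rceil = s$ into the Singleton-type bound to obtain $d \le r+2$. I then split on whether the polynomial $p(z):=\sum_{j=1}^{t}a_{0,j}z^{j-1}$ vanishes identically. If $p\neq 0$, then $p$ has at most $t-1=s-1$ roots, so at least one of the last $s$ coordinates --- which are the values of $p$ at the distinct points $z(Q_1),\ldots,z(Q_s)$ --- is nonzero; meanwhile $\deg(f_{\mathbf{a}})_{\infty}\le (t-1)(r+1)$, yielding at least $s(r+1)-(s-1)(r+1)=r+1$ nonzero entries among the first $s(r+1)$ coordinates, for total weight at least $r+2$. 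If $p=0$, then $a_{0,j}=0$ for all $j$ and $f_{\mathbf{a}}=\sum_{i=1}^{r-1}q_i(z)\omega_i$ has pole divisor of degree at most $(t-2)(r+1)+r$; using that a nonzero rational function on an elliptic curve has the same number of zeros as poles (counted with multiplicity), one obtains at least $s(r+1)-(s-2)(r+1)-r = r+2$ nonzero entries among the first $s(r+1)$ coordinates while the last $s$ are zero. In both cases the weight matches the bound, proving optimality.

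The principal obstacle is the pole-divisor bookkeeping, especially in the $p=0$ case, where one must carefully confirm $\deg(f_{\mathbf{a}})_{\infty}\le (t-2)(r+1)+r$ using $(z)_{\infty}=P_{0,1}+\cdots+P_{0,r+1}$ and $(\omega_i)_{\infty}=P_{0,1}+\cdots+P_{0,i+1}$ from Lemma \ref{eclrc}, and then convert this pole bound into a bound on the number of zeros of $f_{\mathbf{a}}$ within the evaluation set $\{P_{u,v}\}$. A secondary subtlety is the need that $z(Q_1),\ldots,z(Q_s)$ be distinct --- this is where the fact $F=\mathbb{F}_q(z)$, again from Lemma \ref{eclrc}, is invoked. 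Once these two points are settled, the case split collapses immediately and the theorem follows.
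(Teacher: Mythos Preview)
Your proposal is correct and follows essentially the same route as the paper: the length/dimension are read off from the underlying elliptic LRC of Section~\ref{section2.5}, the $(r,3)$-locality is exactly Lemma~\ref{eclrcr}, and the optimality argument is the same dichotomy on whether $p(z)=\sum_j a_{0,j}z^{j-1}$ vanishes identically, with the pole bounds $(t-1)(r+1)$ and $(t-1)(r+1)-1=(t-2)(r+1)+r$ in the two cases. The only cosmetic difference is that you specialize to $t=s$ at the outset and compute the target $d=r+2$ first, whereas the paper writes the weight bound $n-(t-1)(r+2)$ (resp.\ $n-(t-1)(r+1)-s+1$) for general $t$ and then observes equality with the Singleton-type bound at $t=s$; your explicit remark that $F=\mathbb{F}_q(z)$ forces $z(Q_1),\ldots,z(Q_s)$ to be distinct is a point the paper leaves implicit.
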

\begin{proof}
From Theorem \ref{eclrc1} the length and dimension of the code $C_{E}$ are indeed $s(r+1)+1$ and $rt-r+1$ respectively. By Lemmas \ref{eclrc} and \ref{eclrcr}, the code $C_{E}$ has locality $r$. It remains to prove that the $C_{E}$ is optimal when $t=s$.

Note that $\{z^j|j=0,\ldots,t-1\}$ are linearly independent. The number of the zeros in $\{\sum\limits_{j=1}^ta_{0,j}z^{j-1}(Q_u)|u=1,\ldots,s\}$ is either at most $t-1$ or $s$. For the first case, we have
\[
{\rm deg}((f_{\bf a})_{\infty})\le (t-1)(r+1).
\]
Thus the Hamming weight of any code in $C_E$ is at least $n-(t-1)(r+2)$. By the Singleton bound of $(r,3)$-LRCs, the minimum distance $d$ of $C_{E}$ satisfies
\[
 n-(t-1)(r+2)\le d\le n-rt+r-1-2(t-1)+1=n-(t-1)(r+2).
\]
Suppose that all $s$ values are 0. Since $\omega_i\in\mathcal{L}(P_1+\cdots+P_{i+1})$, it follows that $a_{0,j}=0$ with $j=1,\ldots,t$ and 
\[
{\rm deg}((f_{\bf a})_{\infty})\le (t-1)(r+1)-1.
\]
Then the Hamming weight of any code in $C_E$ is at least $n-(t-1)(r+1)-s+1$ and
\[
n-(t-1)(r+1)-s+1\le d\le n-(t-1)(r+2).
\]
Therefore if $t=s$, then we have $C_E$ must be optimal.
\end{proof}

\section{Conclusion}
In this paper, we gave some new constructions of optimal LRCs and optimal $(r,3)$-LRCs by extending RS-type blocks. For rational function fields, we extended the LRCs given in \cite{Tamo-family} and \cite{Jin-optimal}. Then we showed that these codes hold optimal if we only extend one RS-type block. It means that one can construct optimal LRCs with length $q+2$ and locality $r$ where $(r+1)|(q+1)$. In addition, we extended all of the RS-type blocks of an LRC, and we obtained an $(r,3)$-optimal LRC. We also showed that the LRCs extended by the Roth-Lempel type are optimal under some special conditions. For elliptic LRCs, we constructed optimal LRCs by extending one block directly. 

The constructions given in this paper can be also generalized to more other LRCs from algebraic curves in~\cite{Barg-lrcalg}. However, for ensuring the locality of these extended codes, more explicit conditions may be required. An interesting direction for the future research is to  consider the optimal extension of LRCs from more algebraic curves.  

\subsubsection*{\bf Data Availability}
No data.

\subsubsection*{\bf Declarations}
Conflict of interest The authors have no Conflict of interest to declare that are relevant to the content of this
article.

\subsubsection*{\bf Acknowledgement}
This work is supported by the National Natural Science Foundation of China (No. 12441107) and	Guangdong Major Project of Basic and Applied Basic Research (No. 2019B030302008),  and 
  Guangdong Provincial Key Laboratory of Information Security Technology (No. 2023B1212060026). 

\bibliography{lrc-references}
	
\end{document}